\newcommand{\R}{\ensuremath{\mathbb{R}}\xspace}
\newcommand{\mkmcal}[1]{\ensuremath{\mathcal{#1}}\xspace}
\newcommand{\D}{\mkmcal{D}}
\newcommand{\NN}{N\hspace{-7.5pt}N}
\DeclareMathOperator{\polylog}{polylog}
\title{Nearest-Neighbor Decompositions of Drawings\footnote{This research 
was started at the 4th DACH Workshop on
    Arrangements and Drawings, February 24--28, 2020, in Malchow,
    Germany.  We thank 
    all participants of the workshops for valuable discussions
    and for creating a conducive research atmosphere.}
}
\titlerunning{Nearest-Neighbor Decompositions of Drawings}
\author{Jonas Cleve}
{Institut f\"ur Informatik, Freie Universität Berlin}
{jonascleve@inf.fu-berlin.de}
{https://orcid.org/0000-0001-8480-1726}
{Supported in part by ERC StG 757609.}
\author{Nicolas Grelier}
{ETH Zürich, Department of Computer Science}
{nicolas.grelier@inf.ethz.ch}
{}
{Supported by the Swiss National Science Foundation 
within the collaborative DACH project 
\emph{Arrangements and Drawings} as SNSF Project 200021E-171681.}
\author{Kristin Knorr}
{Institut f\"ur Informatik, Freie Universität Berlin}
{knorrkri@inf.fu-berlin.de}
{https://orcid.org/0000-0003-4239-424X}
{Supported
by the German Science Foundation within the research
training group `Facets of Complexity' (GRK 2434).}
\author{Maarten L\"offler}
{Utrecht University}
{m.loffler@uu.nl}
{}
{}
\author{Wolfgang Mulzer}
{Institut f\"ur Informatik, Freie Universität Berlin}
{mulzer@inf.fu-berlin.de}
{https://orcid.org/0000-0002-1948-5840}
{Supported in part by ERC StG 757609
and by the German Research Foundation within the collaborative 
DACH project \emph{Arrangements
      and Drawings} as DFG Project MU 3501/3-1.}
\author{Daniel Perz}
{Graz University of Technology}
{daperz@ist.tugraz.at}
{https://orcid.org/0000-0002-6557-2355}
{Partially supported by FWF within the 
collaborative DACH project \emph{Arrangements and Drawings} as 
FWF project \mbox{I 3340-N35}}
\authorrunning{J. Cleve, N. Grelier, K. Knorr, M. L\"offler,
W. Mulzer, and D. Perz}
\keywords{nearest-neighbors, decompositions, drawing}
\begin{document}
\maketitle
\begin{abstract}
Let $\D$ be a set of straight-line segments in the plane, potentially crossing, and let
$c$ be a positive integer. We denote by $P$ the union of the endpoints of the straight-line segments of $\D$ and of the intersection points between pairs of segments.
We say that $\D$ has a \emph{nearest-neighbor decomposition} into
$c$ parts if we can partition $P$ into $c$ point sets $P_1, \dots, P_c$ such that
$\D$ is the union of the nearest neighbor graphs
on $P_1, \dots, P_c$.
We show that it is NP-complete to decide whether $\D$  can be drawn
as the union of $c\geq 3$ nearest-neighbor graphs, even when no two segments cross. We show that for $c = 2$, it is NP-complete in the general 
setting and polynomial-time solvable when no two segments cross. We show the existence of an $O(\log n)$-approximation algorithm running in subexponential time for partitioning $\D$ into a minimum number of nearest-neighbor graphs.

As a main tool in our analysis, we establish the notion of the 
\emph{conflict graph} for a drawing $\D$.
The vertices of the conflict graph are the connected components
of $\D$, with the assumption that each connected component is the nearest neighbor graph of its vertices, and there is an edge between two components $U$ and
$V$  if and only if the nearest neighbor graph of $U \cup V$
contains an edge between a vertex in $U$ and a vertex in $V$.
We show that string graphs are conflict graphs of certain planar drawings.
For planar graphs and complete $k$-partite graphs, we give additional, more efficient constructions.
We furthermore show that there are subdivisions of 
non-planar graphs that are not conflict graphs. Lastly, we show a separator lemma for conflict graphs.
\end{abstract}

\section{Introduction}
\label{sec:Introduction}

Let $P \subset \R^2$ be a finite planar point set, and let
$C$ be a finite set of \emph{colors}. 
A \emph{coloring}
is a function $\sigma: P \to C$ that assigns a color
to each point in $P$.
For any color
$c \in C$, we write $P_c = \{p \in P \mid \sigma(p) = c\}$ for
the points in $P$ that were colored with $c$.

In the following, we assume all pairwise distances
in $P$ are distinct.
The \emph{nearest-neighbor graph for a color $c \in C$}, $\NN_c$,
is the embedded graph with vertex set $P_c$ and a straight-line
edge between $p, q \in P_c$ if and only if $p$ is the
nearest neighbor of $q$ among all points in $P_c$, or vice
versa.\footnote{Our notion of nearest-neighbor graph is undirected, but
a directed version also exists.} We will consider
$\NN_c$ both as a combinatorial graph, consisting of vertices 
and edges, and as a
subset of the plane, consisting of the points in $P_c$ and the line
segments that represent the edges.
We write $\NN = \bigcup_{c \in C} \NN_c$ for the union of
the nearest-neighbor graphs of all colors. Again, we consider
$\NN$ both as a graph and as a set.

\begin{figure}[h]
  \center
  \includegraphics{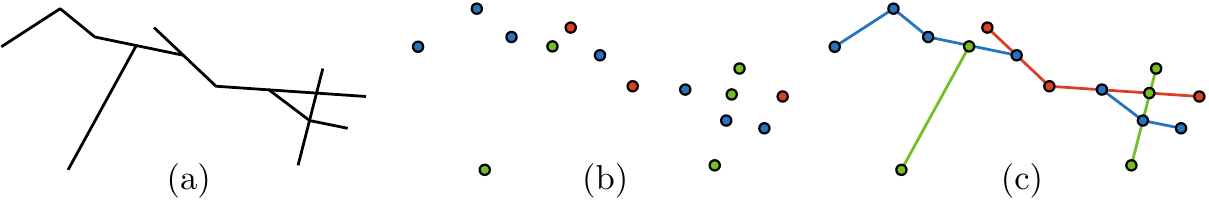}
  \caption{(a) A drawing. 
  (b) A 3-colored point set. (c) The nearest-neighbor graphs.}
  \label{fig:example}
\end{figure}

We are interested in the following problem: suppose 
we are given a \emph{drawing} $\D$, i.e., a set of straight-line segments in the plane such that if two segments intersect, then their intersection is a point and the two segments are not parallel. Under this assumption, by considering a drawing as a set of points $Q$ in the plane, the input segments of $\D$ are the inclusion maximal segments in $Q$.
The \emph{special} points of $\D$ are the endpoints
of the segments in $\D$ and the intersection points between
pairs of segments in $\D$. We denote the set of special points by $P$.
We require that the pairwise distances between the special points of 
$\D$ are all distinct.
Our general task is to find a set of colors $C$ and
a color assignment $\sigma$, such that the union
$\NN$ of the nearest-neighbor graphs for $P$ and $C$
equals $\D$,
interpreted as subsets of the plane.
We call $\NN$ an NN-\emph{decomposition}
of $\D$ with vertex set $P$, where $NN$ stands for \emph{Nearest-Neighbor}
and we call $|C|$ the \emph{color-number} of 
$\NN$: see \cref{fig:example}.

\begin{figure}[tbp]
  \center
  \includegraphics[page=4]{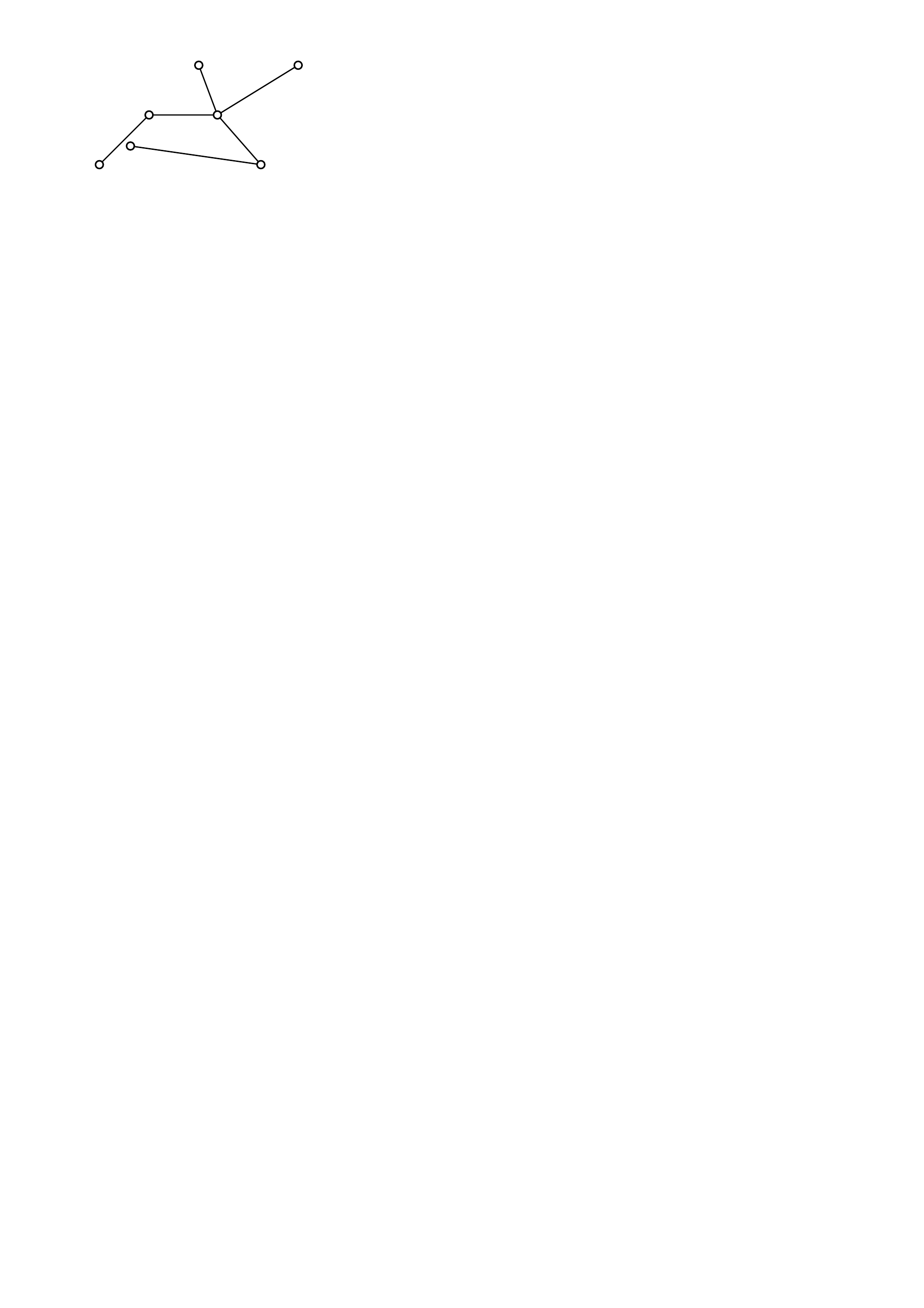}
  \caption{The 
  possible violations that make a drawing non-plane.}
  \label{fig:plane-vs-non-plane-drawings}
\end{figure}

A drawing $\D$ is called \emph{plane} if its segments meet only
at their endpoints, i.e., no segment of $\D$ contains a 
special point in its relative interior; 
see Figure~\ref{fig:plane-vs-non-plane-drawings} for an illustration
where the bold edges contain an endpoint or crossing point (marked with
a square) in their interior which is not allowed by the definition.

Let $\mathcal{C}$ be a connected component in a plane
drawing $\D$, and let $p$ be a special point in $\mathcal{C}$. 
We denote by $a(p)$ the special point  in 
$\mathcal{C} \setminus \{ p\}$ that is closest to $p$ (with distance $d$). 
Let $b(p)$ be the set of special points in $\D$ 
whose distance to $p$ is strictly less than $d$.
By definition, 
$b(p) \subset \mathcal{D} \setminus \mathcal{C}$.  
Let $\mathcal{C}_1$ and $\mathcal{C}_2$ be two distinct 
connected components. We say that $\mathcal{C}_1$ and 
$\mathcal{C}_2$ are \emph{conflicting} if there is a special point 
$p \in \mathcal{C}_1$ such that 
$b(p) \cap \mathcal{C}_2 \neq \emptyset$,
or vice-versa.
We denote by 
$V = \{\mathcal{C}_i\}_{1 \leq i \leq n}$ the connected components of 
$\D$. We define $E$ as the set of pairs $\{\mathcal{C}_i,\mathcal{C}_j\}$ 
where $\mathcal{C}_i$ and $\mathcal{C}_j$ are conflicting. 
We say that the graph $G:=(V,E)$ is the 
\emph{conflict graph} of $\mathcal{D}$.
We call the connected component $\mathcal{C}$ NN-\emph{representable} if 
$\mathcal{C}$ is the nearest-neighbor graph of its
special points. An abstract graph is a \emph{conflict graph} if it is the conflict graph of some plane drawing.

\subparagraph*{Related work.}
The nearest-neighbor graph of a planar point set $P$ is 
well understood~\cite{eppstein97,MM17}. It is a subgraph of the 
relative neighborhood graph of $P$~\cite{jt-rng-92,MM17}, which in turn 
is a subgraph of the Delaunay triangulation. The problem 
of recognizing whether a given abstract graph can be realized as
a nearest-neighbor graph of a planar point set is open 
and we conjecture it to be hard. 
In contrast, testing whether a given embedded graph is a (single) 
nearest-neighbor graph is easy, as it suffices to test if each vertex is indeed conencted to its closest point.

Our problem also has applications in automated content generation 
for puzzle games: van Kapel introduces a version 
of \emph{connect-the-dots} puzzles where the task 
is to connect dots based on colors rather than numbers~\cite{thesis/tim}. 
In this puzzle, points may have multiple colors; 
see \cref{fig:example-puzzle}. 
Van Kapel implemented a heuristic approach for generating 
such puzzles. The heuristic works well for small instances, 
but for larger instances, it 
generates too many colors to be practical~\cite{lkkkks-ctdfp-14}.

\begin{figure}[h]
  \center
  \includegraphics{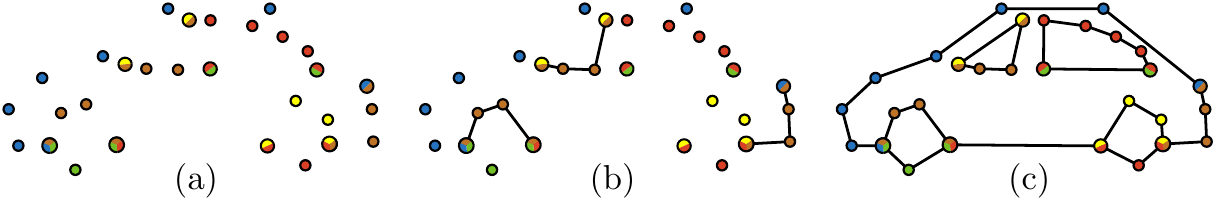}
  \caption{(a) Multi-colored points 
  with 5 colors: blue, red, green, yellow, and orange: (b) 
  the orange nearest-neighbor graph. (c) 
  The union of all nearest-neighbor graphs. Figure 
  taken from~\cite {lkkkks-ctdfp-14}.}
  \label{fig:example-puzzle}
\end{figure}

\subparagraph*{Our Results.}
First, we consider the problem of testing whether a given drawing $\D$ 
can be decomposed into $c$ nearest-neighbor graphs. We show that under the assumption that the drawing is plane, meaning that segments in $\D$ may only meet at their endpoints, this problem is in P for $c \leq 2$, and NP-complete for $c \geq 3$. 
If we allow the segments of $\D$ to cross 
the problem is already NP-complete for $c=2$.

Inspired by our algorithms, we also introduce
the new graph class of \emph{conflict graphs} of drawings.
We show that string graphs are conflict graphs and give additional,
more efficient constructions in terms of size complexity for planar graphs and complete $k$-partite graphs.
On the other hand, subdivisions of non-planar graphs are not conflict graphs.

We show a separator lemma for conflict graphs, which allows us to provide an algorithm for computing a maximum independent set in conflict graphs in subexponential time. Using it as a subroutine, we obtain an $O(\log n)$-approximation algorithm for coloring conflict graphs that runs in subexponential time. This problem is of importance to us because we show that coloring conflict graphs is equivalent to partitioning a plane drawing into nearest-neighbor graphs.

\section{Existence of NN-Decompositions on Special 
Points}

\subsection{The Plane Case}

Let $\D$ be a straight-line drawing. If $s$ is a line segment with $s \subset \D$ such that
$s$ is not a segment of $\D$, we say that $s$ is \emph{covered} by 
$\D$. Recall that the vertex set of the NN-decomposition consists of the special points
in $\D$.
We investigate the question under which circumstances it is possible
to find such a NN-decomposition of $\D$.

\begin{lemma}\label{lem:planeSpecialPoints}
Let $\D$ be a plane drawing. 
Suppose there is a NN-decomposition $\NN$ of
$\mathcal{D}$, and
let $\sigma$ be the underlying coloring
of $\NN$. Then,
for any connected component $\mathcal{C}$ of $\D$, 
the coloring $\sigma$ assigns the same color to all special points
in $\mathcal{C}$.
\end{lemma}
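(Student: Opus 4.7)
\medskip

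\noindent\textbf{Proof plan.}
The plan is to show that every individual segment $pq$ of $\D$ is actually an edge of some nearest-neighbor graph $\NN_c$, which forces $\sigma(p) = \sigma(q)$; the lemma then follows immediately by the connectivity of $\mathcal{C}$.

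First, I would reduce the lemma to the following local claim: \emph{for every segment $pq$ of $\D$, there exists a color $c$ such that $pq$ is itself an edge of $\NN_c$}. Assuming this, any two special points joined by a segment of $\D$ are assigned the same color. Walking along a path in $\mathcal{C}$ then propagates the color, so all special points of a connected component receive the same color, which is exactly the conclusion of the lemma. This reduction is essentially free.

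The real work is establishing the local claim, and this is where planarity enters. Pick an interior point $x$ of the segment $pq$. Since $\D = \NN$ as subsets of the plane, $x$ lies on some edge $e = p'q'$ of some $\NN_c$, where $p', q' \in P_c \subseteq P$. I would argue that $e$ and $pq$ lie on the same line: both contain $x$ in their relative interiors and are straight segments, and $e \subseteq \D$, so a neighborhood of $x$ along $e$ is contained in $\D$; if $e$ were not collinear with $pq$, then two non-parallel segments of $\D$ would cross at the interior point $x$, contradicting the fact that $\D$ is plane (no special point, and in particular no crossing, lies in the relative interior of a segment). Next, since $pq$ is an inclusion-maximal segment of $\D$, the edge $e$ cannot extend strictly beyond $pq$, so $e \subseteq \overline{pq}$. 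Finally, the endpoints $p', q'$ of $e$ are special points lying on $\overline{pq}$; because $\D$ is plane, the relative interior of $pq$ contains no special point, so $\{p', q'\} = \{p, q\}$. Hence $pq$ is the edge $e$ of $\NN_c$, as claimed. Since both endpoints of an edge of $\NN_c$ are in $P_c$, we get $\sigma(p) = \sigma(q) = c$.

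The main obstacle is the middle step: ruling out that $e$ crosses $pq$ transversally at $x$ or extends past the endpoints $p$ or $q$. Both issues are handled by the planarity of $\D$ combined with the inclusion-maximality of its segments, but one must be careful to separate the two uses: collinearity of $e$ with $pq$ uses the absence of crossings, while $\{p', q'\} = \{p, q\}$ uses the absence of special points in relative interiors of segments. Once both are in place, the rest of the argument is purely combinatorial connectivity, so the proof is short.
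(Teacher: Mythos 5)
Your proof is correct and takes essentially the same route as the paper's: the paper argues by contradiction that a segment of $\D$ with differently colored endpoints would force a covering edge of $\NN$ to pass through a special point, which planeness forbids, whereas you prove the equivalent direct claim that every segment of $\D$ must itself be an edge of some $\NN_c$ and then propagate the color by connectivity. Your version is in fact spelled out in more detail (collinearity via planeness, inclusion-maximality, endpoint identification) than the paper's terse contradiction argument.
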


\begin{proof}
Suppose $\D$ has a connected component $\mathcal{C}$ in
which $\sigma$ assigns two distinct colors.
Then, $\mathcal{C}$ has a segment $s = uv$ between
two special points $u$ and $v$ such that
$\sigma(u) \neq \sigma(v)$.
However, the line segment $uv$ must
be covered by $\NN$, and thus, there exists a segment $t$ in $\NN$ 
that contains $u$, $v$, and another special point of $\D$ (since
the segments in $\NN$ are derived from nearest-neighbor relations between
points of the same color). By our assumption that $\D$ is a plane drawing,
the segment $t$ is not in 
$\D$, so $\NN$ is not an NN-decomposition of $\mathcal{D}$, 
a contradiction.
\end{proof}

\begin{theorem}
Let $C$ be a set of colors with
$|C| \leq 2$. 
There is a polynomial-time algorithm for the following task:
given a plane drawing $\D$, is there a NN-decomposition of $\D$ with color set $C$?
\end{theorem}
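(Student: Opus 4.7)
The plan is to reduce the problem to a coloring question on the conflict graph of $\D$. By \cref{lem:planeSpecialPoints}, in any NN-decomposition every special point within a single connected component of $\D$ receives the same color, so we may work at the level of components: a NN-decomposition with color set $C$ corresponds to a coloring $\sigma \colon V \to C$ on the vertex set $V$ of the conflict graph $G = (V,E)$ of $\D$.

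The algorithm proceeds in four steps. First, compute the connected components $\mathcal{C}_1,\dots,\mathcal{C}_n$ of $\D$. Second, verify that each $\mathcal{C}_i$ is NN-representable; this is done by computing for every special point $p \in \mathcal{C}_i$ its nearest neighbor $a(p)$ within $\mathcal{C}_i$ and checking that the resulting edge set coincides with $\mathcal{C}_i$. If any component fails, return NO, since \cref{lem:planeSpecialPoints} forces each component to be realized monochromatically and hence as its own NN-graph. Third, build $G$: for every ordered pair $(\mathcal{C}_i,\mathcal{C}_j)$ with $i\neq j$ and every $p \in \mathcal{C}_i$, test whether $b(p)\cap \mathcal{C}_j \neq \emptyset$; add an edge whenever such a witness is found. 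Fourth, for $|C|=1$ answer YES iff $G$ has no edges, and for $|C|=2$ answer YES iff $G$ is bipartite, which can be tested by BFS in linear time. All four steps are clearly polynomial in $|P|$.

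The crux is to show that a proper $|C|$-coloring of $G$ corresponds exactly to a valid NN-decomposition of $\D$. One direction is immediate from the definition of $E$: if two conflicting components receive the same color, then by the witness $p$ and $q \in b(p) \cap \mathcal{C}_j$, the point $q$ is strictly closer to $p$ than $a(p)$ in the combined color class, forcing a new edge in the nearest-neighbor graph that is not present in $\D$. For the converse, suppose all same-color components are pairwise non-conflicting and each is NN-representable; I would argue that for every $p$ in a color class, the closest same-color point is precisely $a(p)$, as any strictly closer same-color witness would lie in some $b(p)\cap \mathcal{C}_j$ and thus produce a conflict edge. Hence the nearest-neighbor graph of each color class equals the union of the corresponding $\mathcal{C}_i$, which by NN-representability equals $\bigcup_i \mathcal{C}_i$ itself.

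The main obstacle is this correctness argument, in particular ensuring that pairwise non-conflict among same-color components suffices to rule out all spurious cross-component edges, including those arising when more than two components share a color. This follows because a spurious edge $(p,q)$ between components $\mathcal{C}_i$ and $\mathcal{C}_j$ of the same class is witnessed by either $q\in b(p)$ or $p\in b(q)$, which already gives a pairwise conflict between $\mathcal{C}_i$ and $\mathcal{C}_j$; no ``triple interaction'' can occur, so reducing to bipartiteness of $G$ is sound.
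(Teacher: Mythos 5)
Your proposal is correct and follows essentially the same route as the paper: apply \cref{lem:planeSpecialPoints} to reduce to monochromatic components, check NN-representability of each component, build the conflict graph, and decide $1$- or $2$-colorability (edgelessness resp.\ bipartiteness). The extra correctness details you supply (pairwise non-conflict implies each point's nearest same-color neighbor is $a(p)$, and cross-component NN edges cannot be covered by $\D$) are sound and merely make explicit what the paper leaves to the definition of the conflict graph.
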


\begin{proof}
Let $\D$ be a plane drawing.
If there is a decomposition of
$\D$ with color set $C$, then, by
\cref{lem:planeSpecialPoints},
every connected component is colored with a single color of $C$,
i.e., every connected component of $\mathcal{D}$ is 
NN-representable.
The latter necessary condition can be checked in polynomial time,
as we only need to compute the nearest-neighbor graph of the
special vertices in each component. If there is a connected component
where this is not the case,
the algorithm answers that there is no solution. 

Otherwise, we construct the 
conflict graph $G$ of $\D$, and we check 
if $G$ can be colored with $C$. This takes polynomial
time since $|C|\leq 2$  (for $|C|=2$, check whether $G$ is
bipartite, for $|C|=1$, check that $G$ has no edges).
Now, if $G$ is $C$-colorable,
we give all special points in a component $\mathcal{C}$ 
the color assigned to the corresponding vertex in $G$.
Since all connected components are NN-representable, 
this is also a NN-decomposition of $\mathcal{D}$
with $C$.
On the other hand, if $\D$ has a NN-decomposition 
with color set $C$, then $G$ must be $C$-colorable, by
definition of $G$.
\end{proof}

\begin{theorem}\label{thm:NP3colors}
Let $C$ be a set of colors with
$|C| \geq 3$. 
The following task is NP-complete:
given a plane drawing $\D$, is there a NN-decomposition of $\D$ with color set $C$?
\end{theorem}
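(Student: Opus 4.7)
Membership in NP is routine. A coloring $\sigma\colon P\to C$ is a polynomial-size witness; given $\sigma$ one computes each $\NN_c$ in $O(n\log n)$ time and tests whether $\bigcup_c \NN_c=\D$ as subsets of the plane.

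For hardness, my plan is to reduce from $c$-coloring, which is NP-hard for every fixed $c\geq 3$. The structural observation that makes the reduction clean is essentially a strengthening of \cref{lem:planeSpecialPoints}: on a plane drawing $\D$ in which every connected component is NN-representable, an NN-decomposition with color set $C$ exists if and only if the conflict graph $G(\D)$ admits a proper $|C|$-coloring. The forward direction uses the lemma (each component is monochromatic) plus the fact that conflicting components sharing a color would destroy some nearest-neighbor edge. The backward direction follows from the definition of conflict: if no two conflicting components share a color, the nearest neighbor of any point in its color class is its nearest neighbor within its own component. Thus it suffices to transform, in polynomial time, a graph $H$ from an NP-hard class of $c$-coloring instances into a plane drawing $\D_H$ whose components are each NN-representable and whose conflict graph is $H$.

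For this transformation I would invoke the conflict-graph realization results announced in the abstract: string graphs, and in particular planar graphs, are conflict graphs of plane drawings assembled from NN-representable components. For $c=3$, starting from a planar graph $H$, on which 3-coloring is known to be NP-hard, already suffices. For $c\geq 4$, I would augment the drawing with $c-3$ extra components that are pairwise in conflict and each in conflict with every vertex-component of $\D_H$; the conflict graph becomes the join $H\vee K_{c-3}$, which is $c$-colorable iff $H$ is $3$-colorable. One may alternatively reduce from $c$-coloring a class of graphs known to be realizable as conflict graphs and hard for each $c\geq 3$, which the string-graph realization should accommodate.

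The main technical obstacle is the realization construction itself. One needs to place vertex-components and auxiliary inter-component ``conflict witnesses'' so that, under the all-distances-distinct assumption, every conflict in the drawing corresponds to an edge of $H$, no spurious conflicts appear between non-adjacent components, and the drawing remains plane. A multi-scale distance regime---intra-component distances much smaller than intended conflict distances, which in turn are much smaller than all other inter-component distances---handles the quantitative part; routing the witnesses without unwanted crossings is the delicate geometric part and is dealt with in the later sections of the paper that prove string and planar graphs are conflict graphs, so the present theorem can be obtained as a black-box appeal to those constructions together with the structural equivalence above.
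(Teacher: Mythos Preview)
Your structural equivalence---on a plane drawing whose components are all NN-representable, an NN-decomposition with color set $C$ exists iff the conflict graph is properly $|C|$-colorable---is correct and is indeed the backbone of the argument. The gap is in how you discharge the realization step for $|C|\geq 4$.

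For $c=3$ your route works: planar $3$-coloring is NP-hard, and \cref{lem:planarconflict} realizes any planar graph as a conflict graph with components of polynomial complexity (the paper itself remarks that this lemma yields an alternative hardness proof). But for $c\geq 4$ each of your proposed fixes fails. The string-graph realization of \cref{thm:stringgraphsareconflictgraphs} is \emph{not} polynomial: the paper explicitly notes that the resolutions grow superexponentially, so it ``does not give us a polynomial-time reduction.'' Your join construction $H\vee K_{c-3}$ requires $c-3$ components each conflicting with \emph{every} component of $\D_H$ and with one another; no such ``universal component'' lemma is proved, and it is not obvious how to build one, since conflicts are governed by local nearest-neighbor distances and a single component cannot cheaply be made to conflict with many far-apart components. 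Finally, you cannot simply keep $H$ planar, because planar $c$-coloring is trivial for $c\geq 4$.

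The paper's proof avoids this by taking a different, self-contained route. Rather than invoking a general realization theorem, it constructs explicit geometric gadgets ($k$-wires, $k$-chains, $k$-clones, and a $k$-crossing) whose conflict graphs are precisely the gadgets of Gr\"af, Stumpf, and Wei{\ss}enfels used to reduce $k$-coloring to unit-disk $k$-coloring. These are drawn directly with polynomial coordinate complexity for every $k\geq 3$ and assembled along an orthogonal layout of the input graph. In effect the paper proves polynomial realizability only for the specific family of graphs output by the Gr\"af--Stumpf--Wei{\ss}enfels reduction, which is exactly what is needed and sidesteps the superexponential blow-up of the general string-graph construction.
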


\begin{proof}
Gr{\"a}f, Stumpf, and Wei{\ss}enfels~\cite{graf1998coloring} 
showed how to reduce $k$-colorability to $k$-colorability of unit disk 
graphs. 
Our proof is inspired by theirs. 
Let $k = |C|$.
We show the NP-hardness of coloring the special points 
of $\D$ with $k \geq 3$ colors  
by means of a reduction from $k$-colorability. 
We make use of four types of gadgets: $k$-wires, $k$-chains, $k$-clones, 
and $k$-crossings. 
They are depicted in \cref{fig:wire,fig:chain,fig:clone,fig:crossingGadget}, 
together with their conflict graphs. 
The symbol consisting of a number $x$ in a circle denotes a 
clique of size $x$. 
A vertex $v$ connected to such a symbol means that there is an 
edge between $v$ and all the vertices of the clique. 
These conflict graphs are exactly the gadgets defined by 
Gr{\"a}f, Stumpf, and Wei{\ss}enfels. 
Note that each connected component in these gadgets is NN-representable. 
The gadgets shown are for $k=5$.

\begin{figure}[tbp]
    \centering
    \includegraphics[page=1,width=0.4\textwidth]{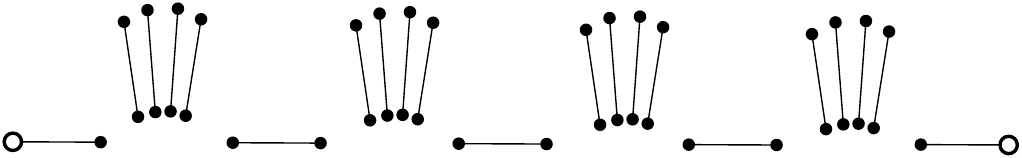}
    \hfill
    \includegraphics[page=2,width=0.4\textwidth]{wire}
    \caption{A $5$-wire of length $5$ and the conflict 
    graph of a $k$-wire of length $5$. The symbol consisting of a number $x$ in a circle denotes a 
clique of size $x$.}
    \label{fig:wire}
\end{figure}

\begin{figure}[tbp]
    \centering
    \includegraphics[page=1,width=0.45\textwidth]{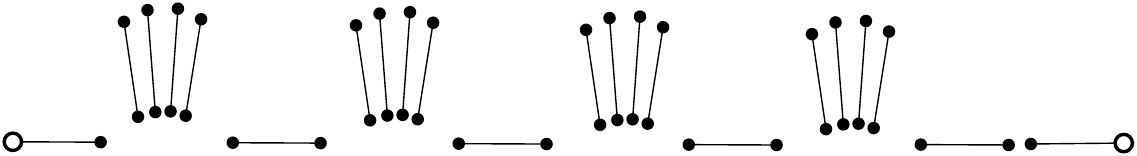}
    \hfill
    \includegraphics[page=2,width=0.45\textwidth]{chain}
    \caption{A $5$-chain of length $5$ and the conflict 
    graph of a $k$-chain of length $5$.}
    \label{fig:chain}
\end{figure}

In \cref{fig:wire,fig:chain,fig:clone}, there are several sets 
of four segments that are very close and nearly vertical. 
For other values of $k$, the gadgets are analogous, 
but with $k-1$ almost vertical segments instead of four.
Similarly, in Figure~\ref{fig:crossingGadget}, there are five sets 
consisting of three close segments. For other values of $k$, 
there are five sets of $k-2$ segments. 
In \cref{fig:wire,fig:chain}, $k$-wires and $k$-chains are drawn 
as if they were on a line, but they may also bend with a right angle.
Note that in \cref{fig:wire,fig:chain,fig:clone,fig:crossingGadget}
some vertices are specially marked with larger empty circles.
These vertices will be called \emph{extreme vertices}.

\begin{figure}[tbp]
    \centering
    \includegraphics[page=1,width=0.45\textwidth]{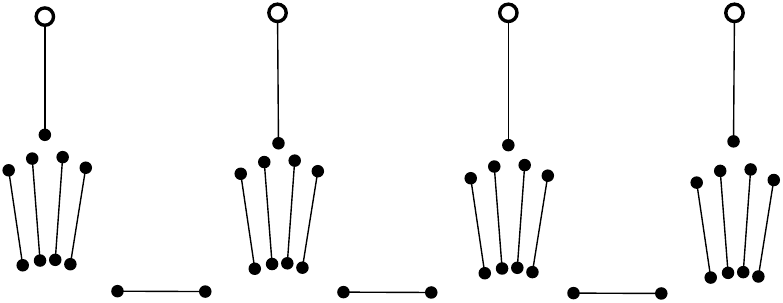}
    \hfill
    \includegraphics[page=2,width=0.45\textwidth]{clone}
    \caption{A $5$-clone of length $4$ and and the 
    conflict graph of a $k$-clone of length $4$.}
    \label{fig:clone}
\end{figure}

\begin{figure}[tbp]
    \centering
    \includegraphics[page=2,width=0.37\textwidth]{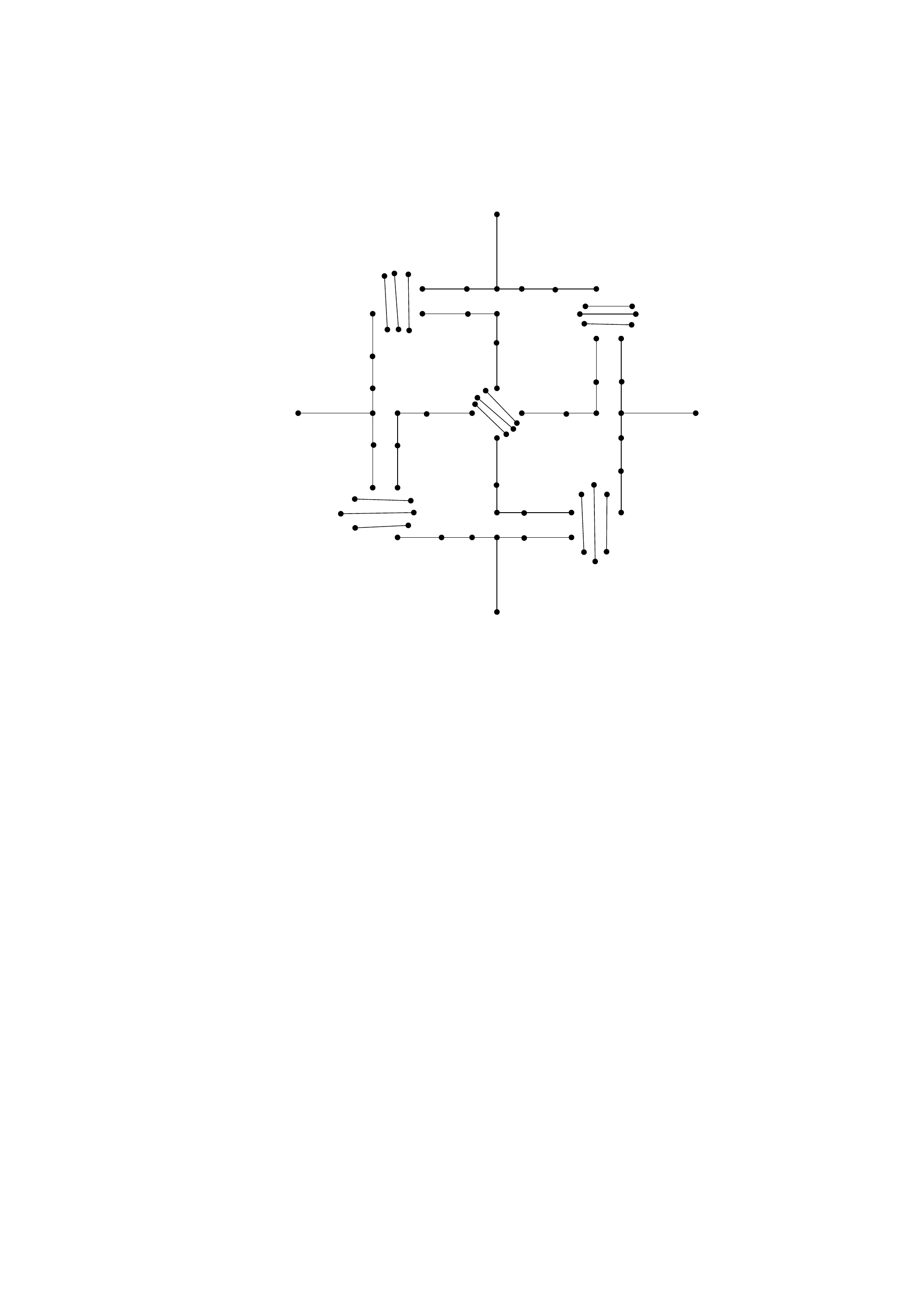}
    \hfill
    \includegraphics[page=2,width=0.37\textwidth]{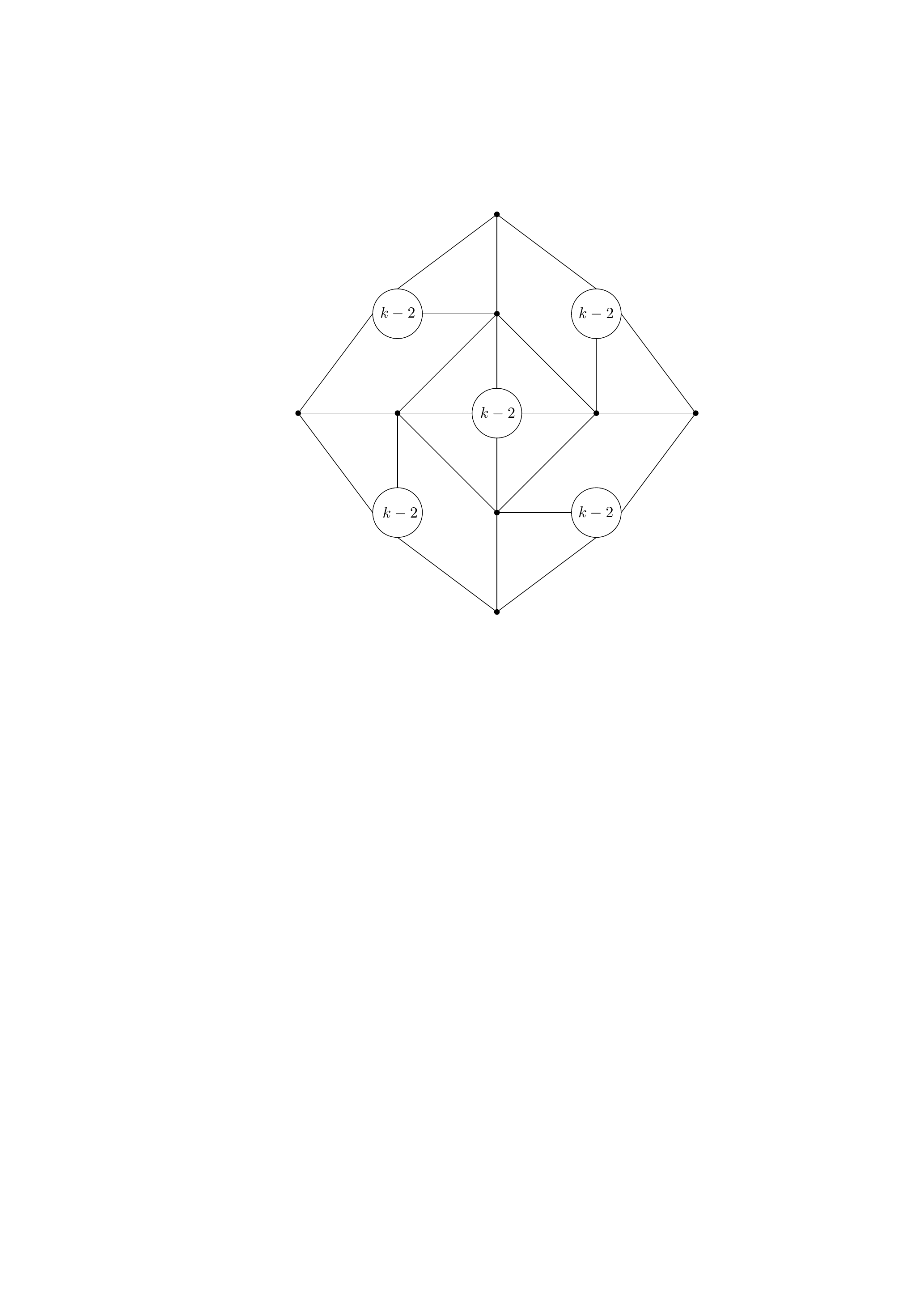}
    \caption{The $5$-crossing gadget and
    the 
    conflict graph of the $k$-crossing gadget.}
    \label{fig:crossingGadget}
\end{figure}

In Figure~\ref{fig:crossingGadget}, there seem to be points 
lying on a segment between two other points. Actually, these 
points are shifted by a sufficiently small $\varepsilon > 0$,
to ensure that the resulting drawing is plane.

The main property of a $k$-wire is that in any coloring with $k$ colors of 
its conflict graph, 
the extreme vertices are assigned the same color. 
In contrast, in a $k$-chain, the extreme vertices are 
assigned different colors. 
In a $k$-clone of length $\ell$, there are $\ell$ extreme vertices. 
In any coloring with colors from $C$, all extreme vertices 
have the same color. Finally, for the $k$-crossing, opposite 
extreme vertices must have the same color;
a pair of consecutive extreme vertices (e.g., top and left 
extreme vertices) may or may not be assigned the same color, as shown in~\cite{johnson1976NPcomplete}.

\begin{figure}[tbp]
  \centering
  \includegraphics[page=1,width=0.2\textwidth]{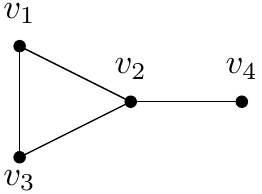}
  \hfill
  \includegraphics[page=2,width=0.7\textwidth]{orthogonal_kColour_graph}
  \caption{A graph with four 
  vertices (left). Converting it to an NN-graph (right).}
  \label{fig:orthogonal-k-color-graph}
\end{figure}

Now we follow the proof of Gr{\"a}f, Stumpf, and Wei{\ss}enfels.
Suppose we are given a graph $G=(V, E)$.
We describe a drawing $\mathcal{D}$ whose conflict graph can be 
colored with color set $C$ if and only if the vertices of $G$ can be 
colored with $C$.
Refer to Figure~\ref{fig:orthogonal-k-color-graph}.
For each vertex $v$ of degree $\delta$ in $G$, we draw a 
$k$-clone of size $\delta$.
The clones are drawn so that they are arranged on a 
horizontal line and such that their upper points have the same $y$-coordinate.
Then, for each edge $\{u,v\} \in E$, we draw 
it on the plane as two vertical segments, each incident to one $k$-clone, 
and one horizontal segment that connects the two upper points of 
the vertical segments.
We do that such that for any pair of edges, their horizontal 
segments have distinct $y$-coordinates.
Then we replace each crossing between a pair of edges by a $k$-crossing.
Finally, let us consider one edge $\{u,v\} \in E$, and let us orient 
it arbitrarily, say toward $v$.
We replace each part of the edge between two $k$-crossings by $k$-wires 
of sufficient length.
If there are no crossings, we replace the edge by a $k$-chain.
Otherwise, the part of the edge between $u$ and the first $k$-crossing 
is replaced by a $k$-wire, and the part between the last 
$k$-crossing and $v$ is replaced by a chain.
As the points of distinct gadgets are sufficiently remote 
(except for pairs of gadgets that are connected on purpose), 
the conflict graph of this drawing is the union of the 
conflict graphs of the individual gadgets.

It is possible to find positions with a polynomial number of bits such
that all pairwise distances are distinct but at the same time the positions
are sufficiently close to the prescribed positions.
This concludes the reduction.

It is straightforward to see that the problem is in NP with the certificate being a coloring of the vertices.
For each point we can easily find its closest point with the same color (we compare squared distances to avoid taking square roots) and add the edges to the resulting graph.
We can then compare the edges with the segments of the original drawing.
\end{proof}

\subsection{The non-plane case}

We show that if drawings are not required to be plane, 
the problem is 
hard for two colors.

\begin{theorem}\label{thm:NPhardness2colors}
Let $C$ be a set of colors with
$|C| = 2$. 
The following problem is NP-complete:
given a drawing $\mathcal{D}$, is there a NN-decomposition of $\mathcal{D}$ with color set $C$?
\end{theorem}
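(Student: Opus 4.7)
The plan is to establish NP-completeness by first verifying membership in NP and then giving a polynomial-time reduction from an NP-hard problem whose natural form is already $2$-coloring based; NAE-3SAT (equivalently, $2$-colorability of $3$-uniform hypergraphs) is a convenient choice. Membership in NP is routine: a certificate is a coloring $\sigma:P\to\{1,2\}$ of the special points, and given $\sigma$ one computes the nearest-neighbor graphs of the two color classes in polynomial time using squared-distance comparisons and checks whether their union equals $\D$ as a set of segments, exactly as at the end of the proof of \cref{thm:NP3colors}.

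The crucial new ingredient compared to the plane case is that \cref{lem:planeSpecialPoints} no longer applies. When two segments of $\D$ cross, their intersection point is a special point that splits each of the two segments into subsegments, and every subsegment must be realized as a nearest-neighbor edge in some color class. Hence the coloring may, and in general must, change along a single segment, and the task is no longer reducible to bipartiteness of a conflict graph. This extra expressive power is exactly what lets us encode Boolean constraints that lie beyond $2$-coloring of an abstract graph.

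For the reduction I would design three kinds of gadgets. A \emph{$2$-wire} is a path-like plane subdrawing whose two designated extreme vertices are forced to receive the same color in every valid $2$-coloring, and a \emph{$2$-chain} forces them to receive opposite colors; both can be built from pieces analogous to those in \cref{fig:wire,fig:chain} but tuned so that each admits exactly two valid $2$-colorings. Variables of the input NAE-3SAT formula are implemented by linking wires and chains so that each variable's truth value corresponds to a single color choice that propagates to all of its occurrences. For each clause I would design a \emph{NAE-gadget} consisting of three short segments that meet the three incoming variable wires and cross in a small, carefully placed cluster; the distances are tuned so that the NN-relations at the crossings can be completed to a $2$-coloring of the whole gadget if and only if the three incoming colors are not all equal. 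Finally, wires must be routed in the plane, and each unavoidable crossing between two wires is absorbed by a small \emph{transparent-crossing gadget} that inserts extra special points so that the two wires can keep independent colors.

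The main obstacle will be the design and analysis of the NAE-gadget and the transparent-crossing gadget. Unlike the $k$-crossing of \cref{thm:NP3colors}, which simulates a crossing while the drawing stays plane, here we use genuine segment crossings to impose a constraint strictly stronger than bipartiteness: the crossing point itself receives a color, and its nearest-neighbor relations to the four surrounding subsegment endpoints restrict the coloring locally. Verifying that the NAE-gadget truly realizes an NAE-gate, and that the transparent-crossing gadget does not secretly link the two crossing wires, is the technical heart of the proof. Once these gadgets are in place the two correctness directions are standard: a satisfying NAE-assignment induces a $2$-coloring by propagating truth values along wires and extending consistently at every gadget, and any valid $2$-coloring induces an NAE-assignment because every clause gadget forbids monochromatic inputs. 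Coordinates with polynomially many bits and pairwise distinct distances are obtained by a final small perturbation, as in \cref{thm:NP3colors}.
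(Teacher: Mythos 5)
Your overall strategy coincides with the paper's: membership in NP exactly as in \cref{thm:NP3colors}, and a reduction from NAE-3SAT in which each variable propagates a single color choice along wire-like gadgets to all of its occurrences, each clause is a gadget that admits a valid decomposition if and only if the three incoming colors are not all equal, and crossings between connection wires are absorbed by a gadget through which the two wires pass with independent colors. This is precisely the architecture of the paper's proof (which uses $2$-clones for variables, $2$-wires for connections, the clause gadget of \cref{fig:clause_all}, and the non-plane crossing gadget of \cref{fig:crossingDegenerate}).

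However, the proposal stops short of the actual content: you explicitly defer the design and verification of the NAE clause gadget and of the transparent-crossing gadget, and these are not a routine afterthought. By the paper's own polynomial-time result for plane drawings with two colors, any plane gadget can only impose pairwise ``must differ'' constraints through the conflict graph, and no conjunction of binary disequality constraints expresses not-all-equal on three wires; hence the clause gadget must exploit genuine non-planarity (special points in the relative interior of segments, with NN edges of one color skipping over points of the other color), and one must exhibit explicit coordinates and check, coloring by coloring, that the required drawing is reproduced exactly when the literals are not monochromatic. The paper does exactly this: the clause gadget is verified by analysing the possible colorings of the three literal segments (\cref{fig:clause_all}a versus b--c), and the crossing gadget is pinned down with distances $1-\varepsilon$, $1$, $1+\varepsilon$ and an additional $\eta$-shift so that opposite extreme segments are forced to share a color while consecutive ones remain independent. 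Without concrete gadgets and this distance-level verification, neither direction of your equivalence can be checked, so as written the argument is an outline whose technical heart is missing, as you yourself acknowledge. A smaller imprecision: a subsegment between consecutive special points need not itself be an NN edge of some color class; it only needs to be covered, possibly by a longer NN edge whose relative interior passes over special points of the other color, and controlling exactly this covering mechanism is what the gadget analysis has to do.
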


\begin{proof}
\begin{figure}[tbp]
  \centering
  \includegraphics[page=2,scale=0.6]{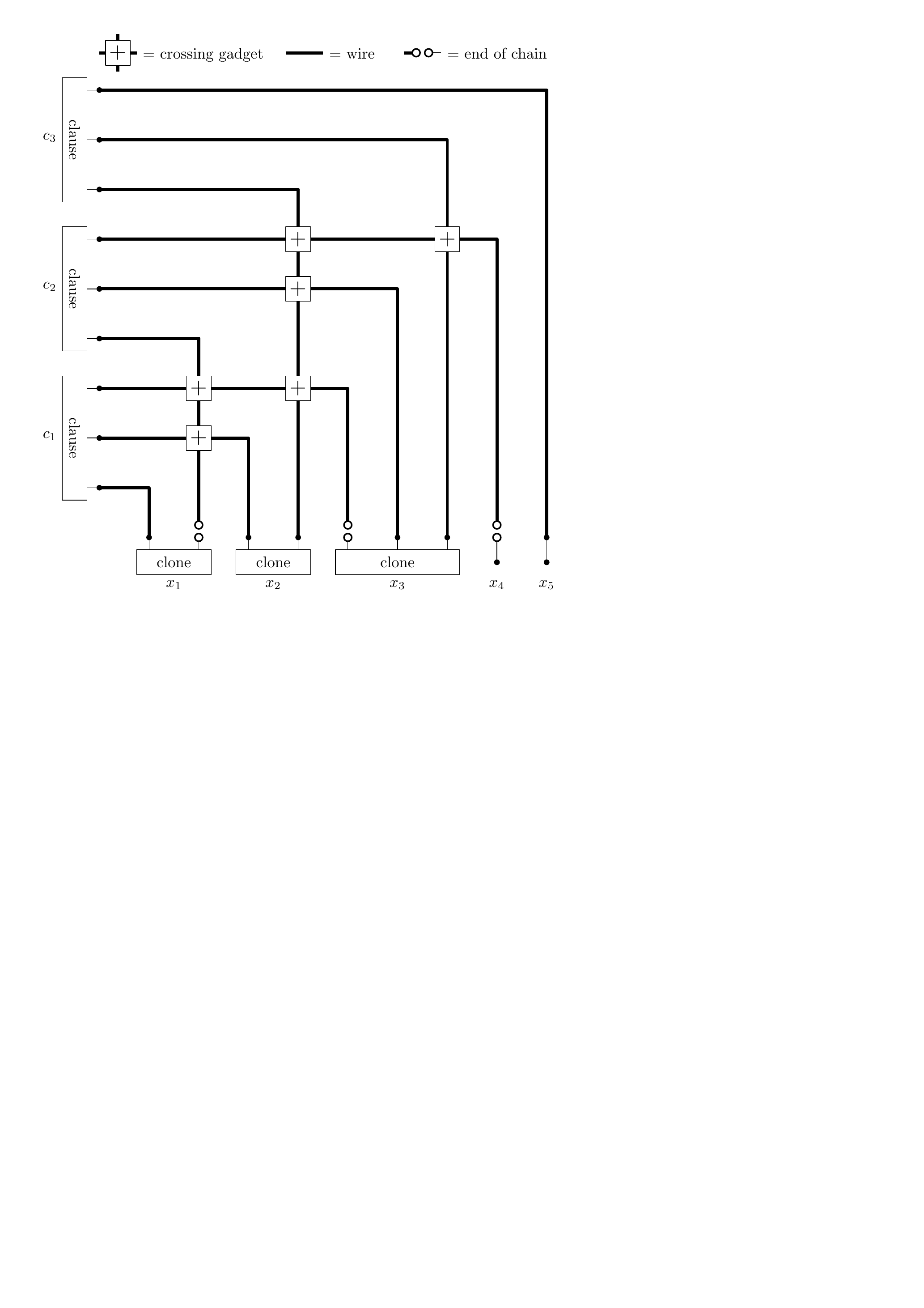}
  \caption{Structure of the conversion of the 
  NAE-3SAT formula with clauses $c_1=(x_1, x_2, \lnot x_3)$,  
  $c_2=(\lnot x_1, x_3, \lnot x_4)$, and  $c_3=(x_2, x_3, x_5)$ 
  into a 2-color $\NN$ graph.}
  \label{fig:two-color-graph}
\end{figure}

We reduce from Not-All-Equal 3SAT (NAE-3SAT), where each clause has three variables and is satisfied if not all variables are equal. 
Let $\Phi$ be an NAE-3SAT formula with variable set $X$ and clause
set $Y$. Let $G_\Phi$ be the associated bipartite graph with
vertex set $X \cup Y$, where two vertices $x$ and $y$ are adjacent 
if and only if $x$ is a variable that appears in clause $y$. 
We draw $G_\Phi$ as follows: clauses are represented 
by vertical segments on the $y$-axis of length $3$. 
Variables of degree $\delta$ are represented as horizontal segments 
on the $x$-axis of length $\delta$. 
Each edge $\{x, y\}$ is drawn as the union of one 
vertical and one horizontal segment. 
The vertical segment is incident to the variable gadget for $x$. 
The horizontal segment is incident to a clause gadget for $y$.
See Figure~\ref{fig:two-color-graph} for an example.

\begin{figure}[tbp]
  \begin{subfigure}{.3\textwidth}
      \centering
      \includegraphics[page=3,width=\textwidth]{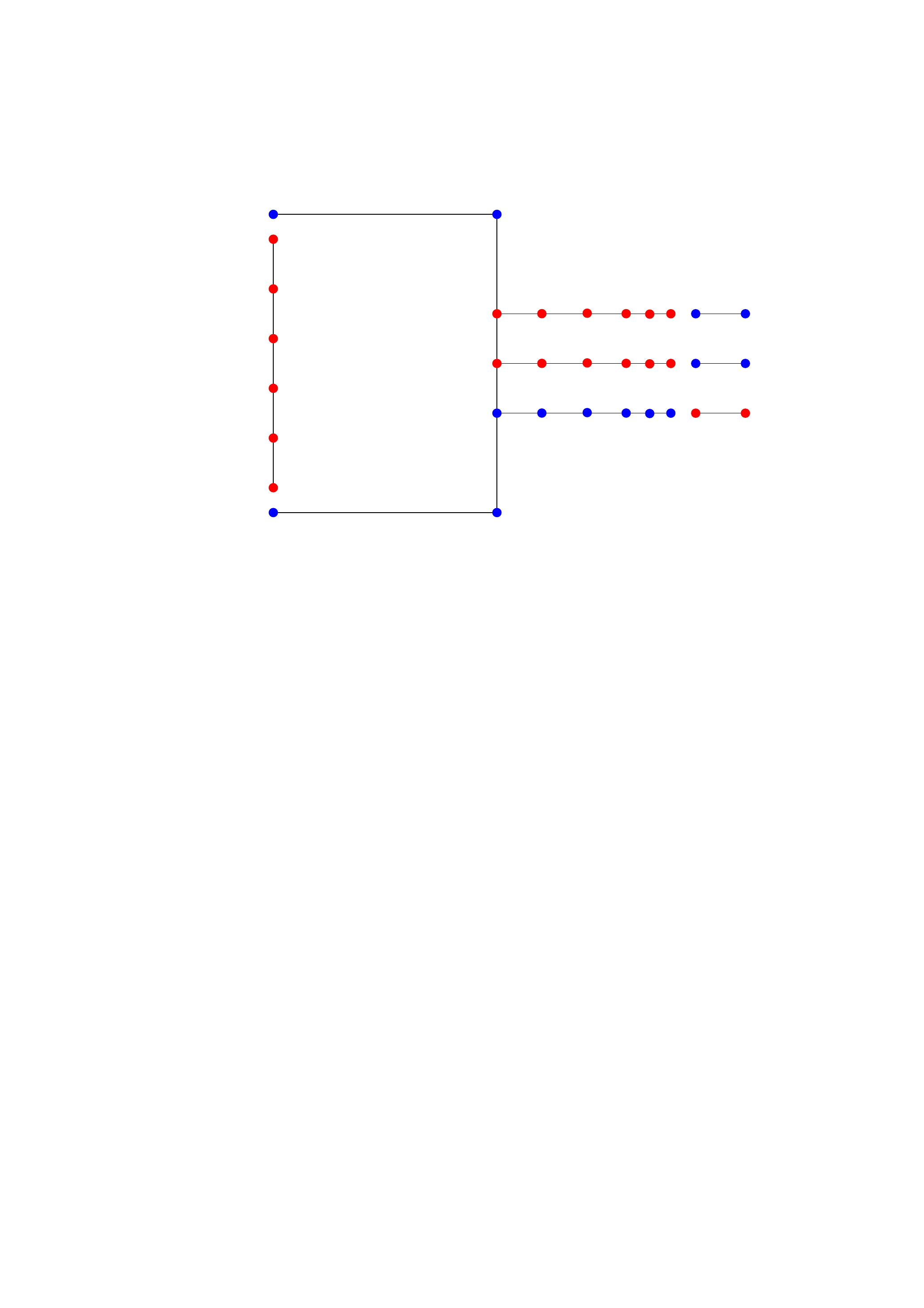}
  \end{subfigure}
  \hfill
  \begin{subfigure}{.3\textwidth}
      \centering
      \includegraphics[page=4,width=\textwidth]{clauseGadget}
  \end{subfigure}
  \hfill
  \begin{subfigure}{.3\textwidth}
      \centering
      \includegraphics[page=5,width=\textwidth]{clauseGadget}
  \end{subfigure}
      \caption{A clause gadget with (a) a valid assignment, (b)--(c) two invalid assignments. The dashed circles indicate distance to the nearest neighbor.}
      \label{fig:clause_all}
  \end{figure}

\begin{figure}[tbp]
    \centering
    \includegraphics[page=2,scale=0.6]{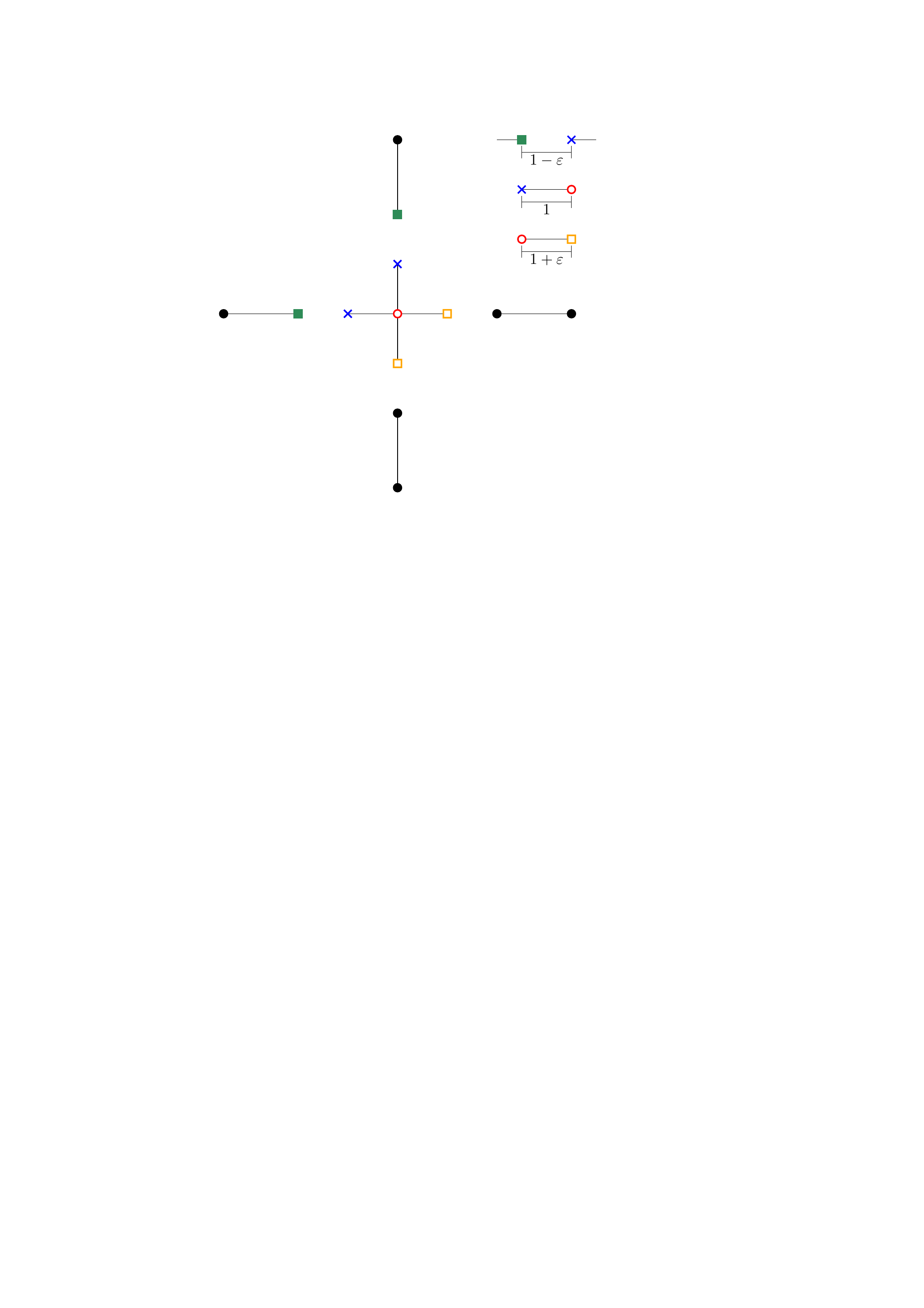}
    \caption{A non-plane crossing gadget.}
    \label{fig:crossingDegenerate}
\end{figure}

We use some gadgets from
the proof of Theorem~\ref{thm:NP3colors}.
We replace each variable by a $2$-clone of length $\delta$.
We replace each clause by the gadget in Figure~\ref{fig:clause_all}a
(see Figure~\ref{fig:clause_all}b-c for assignments where all literals
have the same color). In Figure~\ref{fig:clause_all}a, there seem to be points 
lying on a segment between two other points. Actually, these 
points are shifted by a sufficiently small $\varepsilon > 0$,
to ensure that the resulting drawing is plane.
We replace each crossing by the gadget in Figure~\ref{fig:crossingDegenerate}.
In Figure~\ref{fig:crossingDegenerate}, some points have been colored.
Note that this does not correspond to an assignment of truth
values, but is supposed to provide visual information for the reader.
The distance between a green point and a blue point is 
$1 - \varepsilon$, for a sufficiently small $\varepsilon > 0$.
The distance between a blue point and the red point is $1$.
The distance between the red point and an orange point is $1 + \varepsilon$.
The blue point on the left and the orange point on the right are 
finally shifted by a suitable $\eta > 0$ with $\eta \ll \varepsilon$, 
so that no two points are at the same distance from the red point.
The points in the clause gadget that are on the vertical connected 
component on the left side are arranged so 
that this connected component is NN-representable.
Finally, each part of an edge between two gadgets 
is replaced by a $2$-wire of suitable length.
We have thus obtained a drawing $\mathcal{D}$.

\begin{figure}[tbp]
\hfill
\begin{subfigure}{.3\textwidth}
    \centering
    \includegraphics[page=2,width=0.8\textwidth]{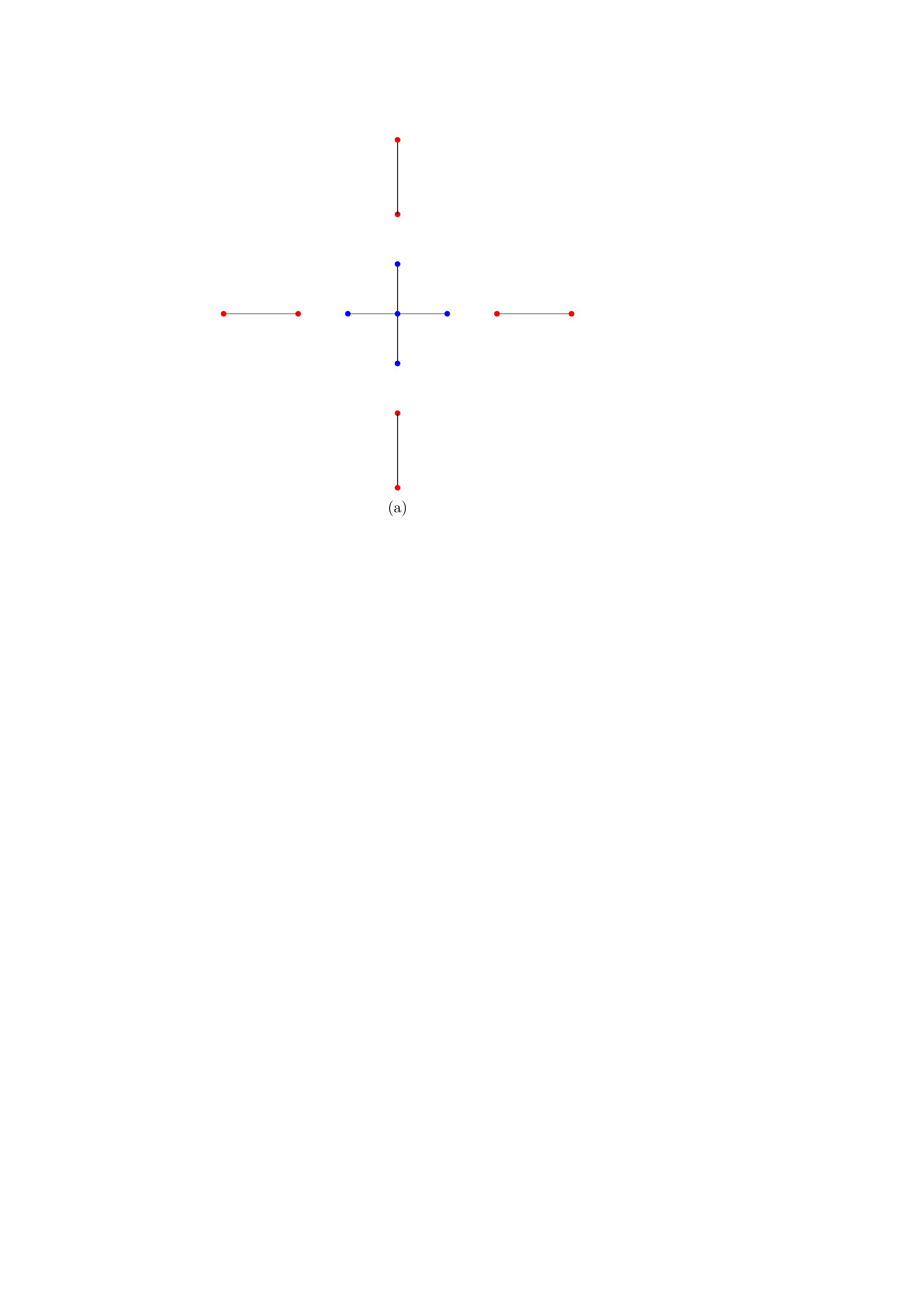}
\end{subfigure}
\hfill
\begin{subfigure}{.3\textwidth}
    \centering
    \includegraphics[page=2,width=0.8\textwidth]{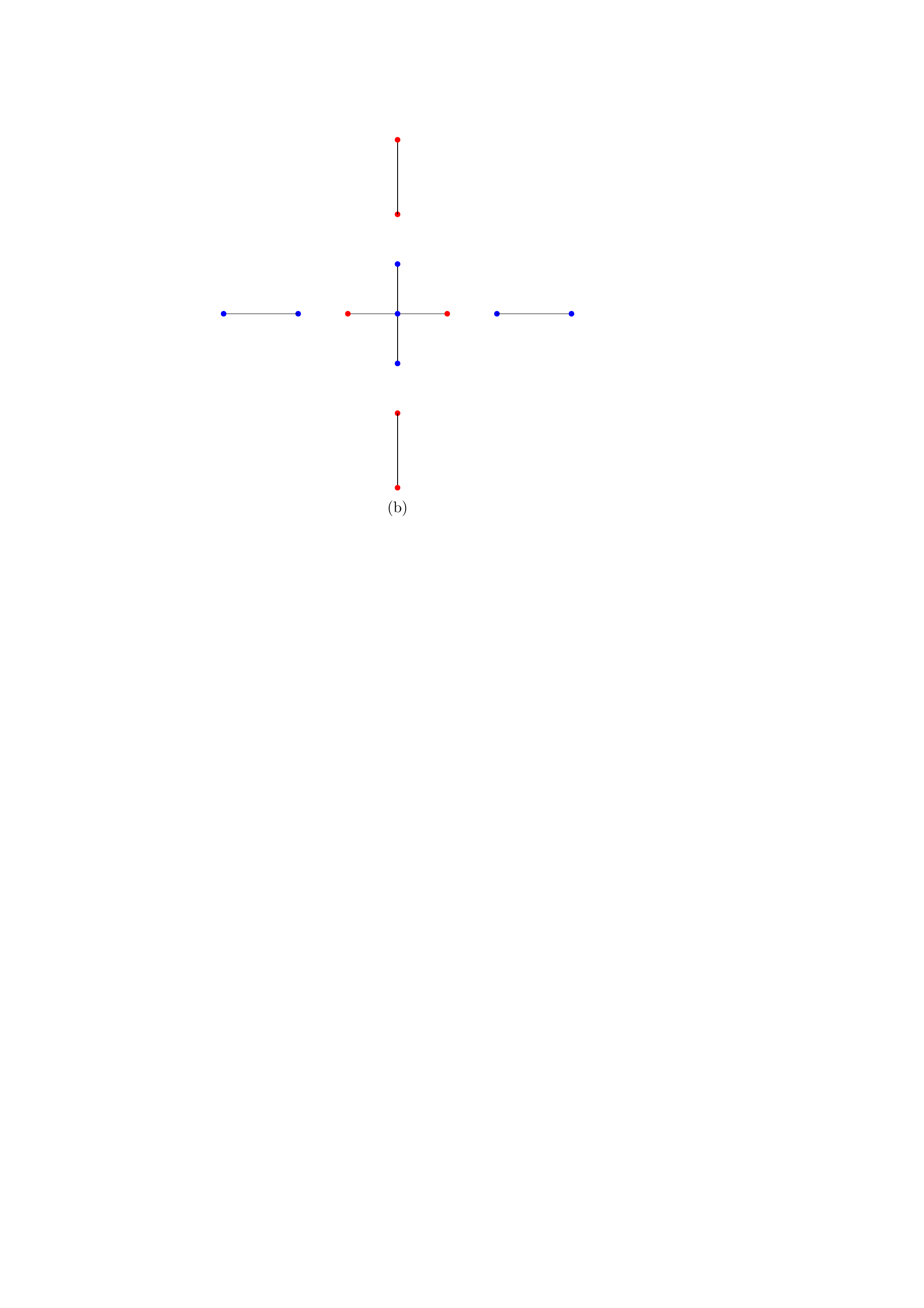}
\end{subfigure}
\hfill
\null
    \caption{Two valid assignments of the non-plane crossing gadget. (a) All extreme segments have the same color. (b) Opposite segments have the same color.}
    \label{fig:crossingDegenerate_assign}
\end{figure}

We claim that $\Phi$ is satisfiable if and only if 
there exists a special-point NN-decomposition of $\mathcal{D}$ 
with two colors. First, notice a clause gadget has a special-point
NN-decomposition if and only if two of the horizontal segments 
on the right side are assigned different colors. Indeed, we show in Figure~\ref{fig:clause_all}b-c that if all literals have the same color, then the corresponding NN-graph is not the one that is required, the one shown in Figure~\ref{fig:clause_all}a. It remains to show that if not all literals have the same color, then we obtain the correct NN-graph. By symmetry, if the top and bottom literals do not have the same color, then we are in the situation of Figure~\ref{fig:clause_all}a. If the top and bottom literals have the same color, say red, then we keep the color of the remaining points of the clause gadget as in Figure~\ref{fig:clause_all}a. Let us denote by $p$ the point of the clause gadget incident to the middle literal. By assumption, $p$ is blue. Let us denote by $q$ the top right vertex of the clause gadget, which is also colored in blue. Therefore the closest neighbohr of $q$ which is also blue is $p$. Likewise, in the NN-graph, the closest neighbohr colored in blue of the bottom right vertex is $p$. This shows that in this situation, the NN-graph is the same as in Figure~\ref{fig:clause_all}a.

In the non-plane crossing gadget opposite segments are assigned the same 
color. All of them may be assigned the same color, as in 
Figure~\ref{fig:crossingDegenerate_assign}a, or consecutive segments 
might be assigned different colors, as in 
Figure~\ref{fig:crossingDegenerate_assign}b. Therefore, by 
associating the colors of $C$ with truth values, $\mathcal{D}$
has a special-point NN-decomposition if and only if 
$\Phi$ is satisfiable.

That the problem is in NP can be seen the same way as in the proof for \cref{thm:NP3colors}.
\end{proof}

\section{Conflict Graphs and Related Graph Classes}

We show in Theorem~\ref{thm:NP3colors} that $k$-coloring 
of conflict graphs is NP-complete, for any fixed $k \geq 3$.
To put this result into context, 
we will show that there exist graphs that are not conflict graphs.
Moreover, we will prove the inclusion of some well-known 
graph classes in the class of conflict graphs. The aim is to characterise the class of conflict graphs, as it gives some information about what kind of running time we can expect for the vertex coloring algorithms on conflict graphs.

Let $G$ be a graph, and let us denote by $G'$ the graph obtained 
from $G$ by subdividing each edge of $G$ once 
(i.e., for each edge $e=\{u,v\}$ in $G$, we add a 
vertex in $G'$ whose neighbors are exactly $u$ and $v$). 
In this section we will show that $G'$ is a conflict graph if and only if $G$ is planar.
Sinden~\cite{sinden1966Topology} showed the same statement for $G'$ being a string graph,
i.e., an intersection graph of continuous curves in the plane.

First, let us recall Sinden's proof for string graphs. 
Let $G$ be a graph, and let $G'$ be obtained from $G$ as
described above.
Assume that $G'$ is a string graph, and consider a representation $R$ of 
$G'$ as a string graph. Contract to a point each curve in $R$ that corresponds 
to a vertex in $G$ and extend the curves corresponding to the edges 
in $G$. In the process, one can maintain the property that the 
curves corresponding to the edges in $G$  intersect only two other 
curves, which are now reduced to single points. Observe that the resulting 
drawing is a plane embedding of $G$. We adapt this method
for conflict graphs.

Let $G$ be a graph, and let $G'$ be the edge-subdivision of $G$. 
Suppose that $G'$ is a conflict graph, and let $R$ be a representation of 
$G'$ as a conflict graph. We still would like to contract
each connected component of $R$ that corresponds to a vertex in $G$ to
a single point. However, now it is not clear that we can extend 
the connected components corresponding to edges in $G$ such that 
they only intersect two other connected components (now reduced to points). 
This is illustrated in \cref{fig:planarSubdivision}. 
The connected components in blue correspond to the vertices in $G$, 
those in red correspond to the edges in $G$. The dashed segments 
show a conflict between two connected components. Inside the green square, 
we have a connected component blocking another one. We therefore 
want to reroute the connected component so that there is no intersection. 
To show how we do it, we first need the following lemmas.

\begin{figure}[h]
\centering
\includegraphics[scale=0.48]{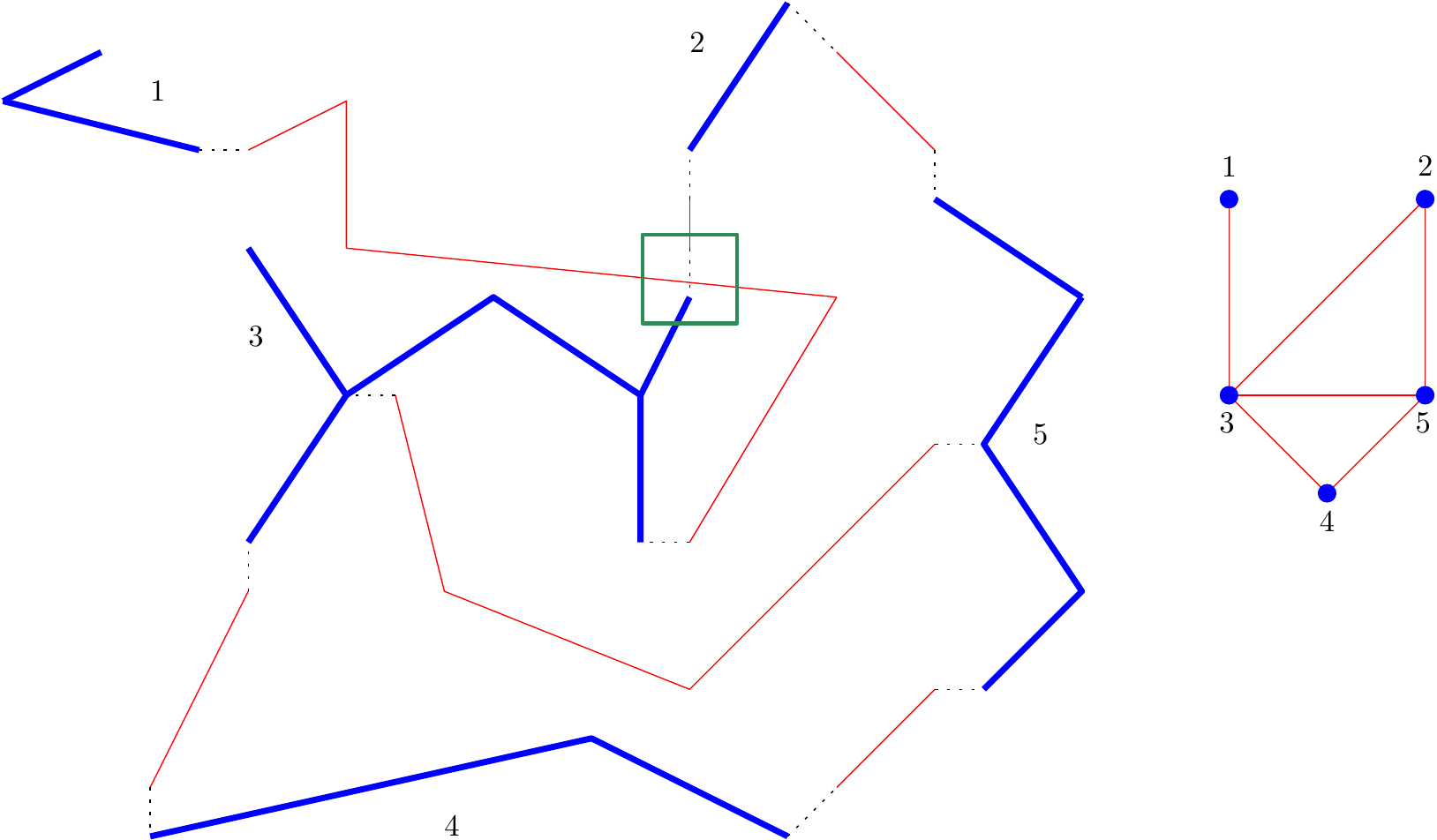}
\caption{A representation of the subdivision of $G$ ($G$ is shown on the right side) as a conflict graph. The connected components in blue correspond to vertices in $G$, the ones in red correspond to edges in $G$. The dashed segments show a conflict between two connected components. In the green zone, a conflict overlaps with a connected component.}
\label{fig:planarSubdivision}
\end{figure}

Let $\D$ be a plane drawing. Let $G_\D$ be its conflict graph. 
Let $\D_1, \D_2$ be a pair of connected components that are conflicting. 
Let us consider two points $p$ and $q$ that certify this conflict 
(for instance, assume without loss of generality $p \in \D_1$, $q \in \D_2$,  
$q \in b(p)$). We denote by $s$ the segment with endpoints $p$ and $q$.

\begin{figure}[h]
\centering
\includegraphics[scale=0.8]{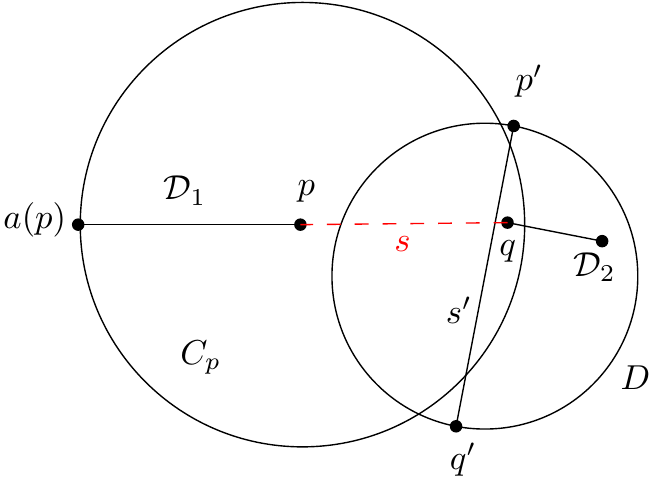}
\caption{Illustration of Lemma~\ref{lem:ConCompIntersect}. The disk $D$ contains $q$.}
\label{fig:ConCompIntersect}
\end{figure}

\begin{lemma}\label{lem:ConCompIntersect}
If a connected component intersects the line segment $s$, then this connected component is conflicting with $\D_1$ or $\D_2$.
\end{lemma}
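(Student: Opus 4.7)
The plan is to fix a point $x \in \D_3 \cap s$ and split into cases based on whether $x$ is a special point of $\D_3$ or sits in the relative interior of a segment of $\D_3$.

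In the first case, I would observe that $x$ lies strictly between $p$ and $q$ on $s$ (since $p,q$ belong to components other than $\D_3$), so $|px| < |pq| < d$, where $d := |p\,a(p)|$ is the nearest-neighbour radius that certifies the $\D_1$--$\D_2$ conflict. This places $x$ in $b(p) \cap \D_3$ and immediately witnesses a $\D_1$--$\D_3$ conflict.

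In the second case, $x$ lies in the interior of a segment $\overline{uv}$ of $\D_3$ whose endpoints $u,v$ are special points of $\D_3$, and I would consider the open disk $D = D(p,d)$. By hypothesis $q \in D$, so $s \subset \overline{D}$ and in particular $x \in D$. If one of $u,v$ lies in $D$, that endpoint is a special point of $\D_3$ in $b(p)$, again giving the conflict. Otherwise both $u,v$ lie outside $\overline{D}$ and $\overline{uv}$ enters $D$ as a chord. For this subcase I would exploit the crossing of $\overline{pq}$ and $\overline{uv}$ at $x$ and apply the triangle inequality in the four triangles $\triangle pxu$, $\triangle pxv$, $\triangle qxu$, $\triangle qxv$ to derive
\[
|pu| + |qv| \le |pq| + |uv|, \qquad |pv| + |qu| \le |pq| + |uv|.
\]
Summed with $|pu|,|pv| \ge d > |pq|$, this forces $|qu| + |qv| < 2|uv|$, so at least one of $|qu|$, $|qv|$ is strictly less than $|uv|$. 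Since $d_3(u), d_3(v) \le |uv|$, this should yield $q \in b(u)$ or $q \in b(v)$ and hence a $\D_2$--$\D_3$ conflict.

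The main obstacle is closing the gap between $|qu|,|qv| < |uv|$ and the strict inequalities $|qu| < d_3(u)$ or $|qv| < d_3(v)$ that a genuine conflict requires: a priori $d_3(u)$ or $d_3(v)$ can be much smaller than $|uv|$ if $u$ or $v$ has a closer companion special point elsewhere in $\D_3$. I expect to bridge this either by arranging that $\overline{uv}$ is a nearest-neighbour edge of $\D_3$ (so that $d_3(u)$ or $d_3(v)$ equals $|uv|$), or by an extremal descent: replacing $\overline{uv}$ with a shorter segment of $\D_3$ incident to $u$ or $v$ and iterating, which must terminate as the segment length strictly decreases at each step until the nearest-neighbour condition becomes tight.
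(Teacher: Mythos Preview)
Your case split and the triangle-inequality route in Case~2b are sound up to the point you flag, and this is a genuinely different argument from the paper's: the paper instead takes the disk $D$ with diameter $\overline{uv}$, argues via a two-circle intersection count that $q\in D$ (using that $p\notin D$ and that $p,q$ lie on opposite sides of the line through $u,v$), and reads off $|qu|<|uv|$ and $|qv|<|uv|$ from the inscribed-angle description of $D$. Incidentally, your two displayed inequalities already give $|qu|<|uv|$ and $|qv|<|uv|$ individually (from $|pu|>|pq|$ and $|pv|>|pq|$ respectively); summing is unnecessary.

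The obstacle you isolate is real, and your ``extremal descent'' does not close it: passing to a shorter segment of $\D_3$ incident to $u$ or $v$ loses the crossing with $s$, so you can no longer rerun the argument. What you are missing is a standing hypothesis of the paper, stated in the abstract and in the definition of the conflict graph: every connected component of $\D$ is assumed to be the nearest-neighbour graph of its special points. Under this assumption the segment $\overline{uv}$ of $\D_3$ is \emph{automatically} a nearest-neighbour edge, so $a(u)=v$ or $a(v)=u$; hence $d_3(u)=|uv|$ or $d_3(v)=|uv|$, and your inequality $|qu|<|uv|$ (respectively $|qv|<|uv|$) immediately yields $q\in b(u)$ (respectively $q\in b(v)$). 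The paper's proof leans on exactly the same fact at the corresponding step (the clause ``for then we have $q\in b(p')$ or $q\in b(q')$'') without spelling it out. With this one observation your first proposed fix applies with no ``arranging'' needed, the descent is unnecessary, and your argument is complete --- and arguably more elementary than the paper's disk computation.
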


\begin{proof}
    The proof is illustrated in Figure~\ref{fig:ConCompIntersect}. Let us denote by $s'$ a segment in a component $\D_3$ which intersects $s$. Let us assume that $\D_3$ is not conflicting with $\D_1$, and let us show that it is conflicting with $\D_2$. Let us consider the circle $C_p$ centered at $p$ going through $a(p)$, the closest point to $p$ among the ones that are in the same connected component. By assumption, the endpoints of $s'$, denoted by $p'$ and $q'$, are not inside $C_p$. In contrast, $q$ is inside $C_p$. We consider the disk $D$ with diameter $s'$. To show that $\D_3$ is conflicting with $\D_2$, it is sufficient to show that $q$ is contained in $D$, for then we have $q \in b(p')$ or $q \in b(q')$. We know that $s'$ intersects $C_{p}$ twice. By assumption, $p$ is not contained in $D$. As two circles can intersect at most twice, $D$ contains all of $C_p$ on at least one side of $s'$. Therefore, $D$ contains~$q$.
\end{proof}

Let us consider another pair of conflicting components, denoted by  $\D_3$ and $\D_4$. Let $u \in \D_3$ and $v \in \D_4$ be two points that certify this conflict. Let us assume without loss of generality $v \in b(u)$. We denote by $s'$ the segment with endpoints $u$ and $v$.

\begin{lemma}\label{lem:TwoSegmentsIntersect}
Assume that $u$ and $v$ are not in $b(p)$. If the segments $s$ and $s'$ intersect, then $p$ or $q$ is in $b(u)$. In particular, at least one of $\D_3$ and $\D_4$ is conflicting with $\D_1$ or $\D_2$.
\end{lemma}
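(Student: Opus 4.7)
The plan is to reduce everything to a single disk-containment claim. Let $D_u$ denote the closed disk that has $s'=uv$ as a diameter. Since $v\in b(u)$ we have $|uv|<d(u,a(u))$, and by Thales' theorem every point $w\in D_u$ satisfies $\angle uwv\ge\pi/2$; the law of cosines then forces $|uw|\le|uv|<d(u,a(u))$, so any such $w\neq u$ lies in $b(u)$. Therefore, to prove the lemma it suffices to show that at least one of $p$ and $q$ lies strictly inside $D_u$.

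The next step is to convert the hypothesis $u,v\notin b(p)$ into angular information. From $q\in b(p)$ we get $|pq|<d(p,a(p))$, while $u,v\notin b(p)$ gives $|up|,|vp|\ge d(p,a(p))>|pq|$. By the law of cosines, the strict inequalities $|up|>|pq|$ and $|vp|>|pq|$ force $\angle puq<\pi/2$ and $\angle pvq<\pi/2$, i.e.\ both $u$ and $v$ see $pq$ under an acute angle.

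Now I would exploit the crossing of $s$ and $s'$. In the generic situation the crossing lies in the interiors of both segments, so the four points $p,u,q,v$ are in convex position in this cyclic order, with $pq$ and $uv$ as the two diagonals. The interior angles of this convex quadrilateral sum to $2\pi$,
\[
\angle upv+\angle puq+\angle uqv+\angle pvq=2\pi,
\]
and combining this with the two acute bounds from the previous paragraph yields $\angle upv+\angle uqv>\pi$. Hence at least one of $\angle upv,\angle uqv$ strictly exceeds $\pi/2$, which by the opening observation places the corresponding point ($p$ or $q$) strictly inside $D_u$ and therefore in $b(u)$.

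The ``in particular'' clause then follows at once: if $p\in b(u)$, then $u\in\D_3$ and $p\in\D_1$ witness a conflict between $\D_3$ and $\D_1$ (note that $p\notin\D_3$, for otherwise the bound $|up|<d(u,a(u))$ would contradict the definition of $a(u)$), and analogously $q\in b(u)$ produces a conflict between $\D_3$ and $\D_2$. The main obstacle I anticipate is the degenerate configuration where $s$ and $s'$ meet only at a common endpoint (forcing two of the $\D_i$ to coincide) or where three of the four points become collinear; I expect the distinct pairwise-distance assumption on the special points of $\D$, together with a direct application of the disk criterion at the shared endpoint, to dispatch these cases uniformly.
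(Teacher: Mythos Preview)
Your argument is correct. Both you and the paper reduce the lemma to the same geometric claim: at least one of $p,q$ lies in the disk $D_u$ with diameter $s'=uv$, after which $v\in b(u)$ (hence $|uv|<d(u,a(u))$) together with Thales immediately yields the desired membership in $b(u)$.

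Where the two proofs differ is in how this disk-containment is established. The paper simply invokes the proof of Lemma~\ref{lem:ConCompIntersect}: since $u,v\notin b(p)$ while the crossing point of $s$ and $s'$ lies strictly inside the circle $C_p$ of radius $d(p,a(p))$ about $p$, the segment $s'$ meets $C_p$ twice, and the same ``two circles intersect in at most two points'' argument used there forces $D_u$ to swallow the arc of $C_p$ on the side of $s'$ containing $q$ (the case $p\in D_u$ being already a win). Your route is instead a direct angle count: from $|up|,|vp|>|pq|$ you get $\angle puq,\angle pvq<\pi/2$ via the law of cosines, and then the angle sum of the convex quadrilateral $puqv$ forces $\angle upv+\angle uqv>\pi$, putting one of $p,q$ strictly inside $D_u$.

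Your approach is self-contained and arguably more elementary, avoiding the topological circle-intersection step; the paper's version is terser because it recycles the previous lemma verbatim. The degenerate cases you flag are harmless: $u,v$ cannot lie on $s$ (their distance to $p$ exceeds $|pq|$), and if $p$ or $q$ lies on $s'$ then that point is already at distance less than $|uv|<d(u,a(u))$ from $u$, so the conclusion is immediate without the quadrilateral argument.
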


\begin{proof}
The situation is similar to the one of Lemma~\ref{lem:ConCompIntersect}. By assumption, the segment $s'$ intersects twice the circle $C_p$ centered at $p$ with radius $a(p)$. Therefore the proof of Lemma~\ref{lem:ConCompIntersect} can be applied in this situation, too.
\end{proof}

We are now ready to prove the theorem. As we are considering conflict graphs, there might be obstacles when we try to follow Sinden's proof that were not there with string graphs. We use the two lemmas above to reroute these obstacles.

\begin{theorem} \label {thm:subdivision}
Let $G$ be a graph and let $G'$ denote the subdivision of $G$. If $G'$ is a conflict graph then $G$ is planar.
\end{theorem}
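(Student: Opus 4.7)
My plan is to adapt Sinden's proof for string graphs, using Lemmas~\ref{lem:ConCompIntersect} and~\ref{lem:TwoSegmentsIntersect} to handle the obstacles that are specific to conflict graphs.

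First I would set up notation. Assume $G'$ is the conflict graph of a plane drawing $\D$, with representation $R$. Partition the connected components of $R$ into \emph{vertex-components} $C_u$, one for each $u \in V(G)$, and \emph{edge-components} $C_w$, one for each subdivision vertex $w \in V(G') \setminus V(G)$. The crucial structural fact is that every subdivision vertex $w$ has exactly two neighbors in $G'$, namely the endpoints $u,v$ of the edge of $G$ that it subdivides; hence every edge-component $C_w$ conflicts with exactly two components, and both of them are vertex-components.

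Next I would build a planar drawing of $G$ from $R$ as follows. For each $u \in V(G)$, contract $C_u$ to a single point $u^*$. For each edge $\{u,v\} \in E(G)$ with subdivision vertex $w$, pick conflict certificates $(p_1,q_1)$ between $C_u$ and $C_w$ and $(p_2,q_2)$ between $C_w$ and $C_v$, with $p_1 \in C_u$, $p_2 \in C_v$, $q_1,q_2 \in C_w$. Draw the edge as the concatenation of (i) a curve inside $C_u$ from $u^*$ to $p_1$, (ii) the certificate segment $p_1q_1$, (iii) a curve inside $C_w$ from $q_1$ to $q_2$, (iv) the certificate segment $q_2p_2$, and (v) a curve inside $C_v$ from $p_2$ to $v^*$. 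After the contractions, segments (i), (iii) and (v) become paths inside single points or inside the (untouched) edge-component; the only pieces that can produce crossings are the certificate segments in (ii) and (iv).

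The obstacles, as in Figure~\ref{fig:planarSubdivision}, are that a certificate segment might be crossed by some other component, or two certificate segments from different edges of $G$ might cross. This is precisely where the two lemmas are applied. By Lemma~\ref{lem:ConCompIntersect}, any component $C_x$ crossing the segment $p_1q_1$ must conflict with $C_u$ or $C_w$. Since $C_w$ has only the two neighbors $C_u$ and $C_v$, either $C_x=C_v$ (which we can absorb by simply using $C_x$ as the endpoint component and stopping there) or $C_x$ is another edge-component $C_{w'}$ incident to $u$, in which case we reroute the arc locally around $u^*$ along $C_{w'}$ and use a fresh certificate. Lemma~\ref{lem:TwoSegmentsIntersect} is used in the same spirit for the case that two certificate segments cross: it forces the existence of a further conflict among the four involved components, which again must be compatible with $G'$ being only a subdivision, and so the crossing can be swapped out for a rerouting along the newly discovered conflict.

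The hard part, as expected, will be making these local reroutings globally consistent, so that the iterative process terminates with a crossing-free drawing of $G$. I expect to argue by a discharging-type or finite-descent argument: each rerouting step strictly decreases some potential (for instance, the total number of crossings among chosen arcs, or a lexicographic order on the certificate segments), and the lemmas guarantee that a valid rerouting exists as long as a crossing remains. Once no crossings survive, contracting the $C_u$ to points produces a planar embedding of $G$, proving that $G$ is planar.
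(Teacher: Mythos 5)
Your proposal follows essentially the same route as the paper: identify that in the subdivision $G'$ every conflict pairs a vertex-component with an edge-component, adapt Sinden's contract-and-extend argument, and invoke Lemma~\ref{lem:ConCompIntersect} for a component blocking a certificate segment and Lemma~\ref{lem:TwoSegmentsIntersect} for two certificate segments crossing, concluding in both cases that the obstruction is incident to a shared vertex of $G$ and can therefore be rerouted locally. The global rerouting step you flag as the remaining ``hard part'' is treated no more formally in the paper, which simply observes that all obstructions are of the reroutable kind (as in Figure~\ref{fig:planarSubdivision}) and concludes, so your sketch is at the same level of detail as the published argument.
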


\begin{proof}
Let us assume we have a representation of $G'$ as a conflict graph. We denote by $V$ the vertices in $G'$ corresponding to vertices in $G$ and by $E$ the vertices in $G'$ corresponding to edges in $G$. For each vertex $v\in V$, we pick an arbitrary point $p_v$ on the connected component that represents $v$ in the conflict representation. We are going to reduce all connected components $v$ to their corresponding point $p_v$. Let $e\in E$ be the vertex in $G'$ corresponding to the edge $\{w,x\}$ in $G$. We want to extend  $e$ to a curve that contains $v_x$ and $v_w$ at each endpoint. We want to do that for all vertices in $E$, such that no two curves intersect, except maybe at endpoints. Therefore, we would obtain a plane representation of $G$.

Let us first consider all connected components in the representation as a conflict graph, before reducing some of them to points, and extending the rest to curves. For each pair of connected components $( \D_1, \D_2)$ that are conflicting, we find two points $p$ and $q$ that certify it, meaning that $p$ is in $b(q)$ or vice versa. We now draw the segment with endpoint $p$ and $q$, for each such pair of conflicting components. We denote by $S$ the set of segments we have obtained. Let us consider a connected component $\cal D$ corresponding to a vertex in $E$, that intersects with some segments in $S$. For one of these segments it intersects, say $s$, we name its endpoints $p$ and $q$. Without loss of generality, we assume $p \in\D_1$ and $q \in \D_2$. By Lemma~\ref{lem:ConCompIntersect}, $\cal D$  is conflicting with $\D_1$ or $\D_2$. By assumption, $\D_1$ and $\D_2$ are conflicting. Therefore one of them, say $\D_1$, corresponds to a vertex in $V$, and the other corresponds to a vertex in $E$. By definition of $G'$, no two vertices in $E$ are connected by an edge. This implies that $\cal D$ is conflicting with $\D_1$. This shows that when we want to extend a connected component $e$ to a curve that contains $v_x$ and $v_w$ at each endpoint (see the notation above), we might be blocked by other curves, but these curves have to also contain $v_x$ or $v_w$ at an endpoint. This is the situation depicted in Figure~\ref{fig:planarSubdivision}. Here we simply have to reroute the edge going from vertex $1$ to $3$. 

One issue that might still occur is that when trying to extend a connected component into a curve, we are not blocked by another connected component, but by the extension into a curve of a connected component. Namely, how do we do the rerouting when two segments in $S$ intersect? Let $s$ and $s'$ be those two segments. By construction, $s$ is a segment between two connected components $\D_1$ and $\D_2$. Without loss of generality, we can assume that $\D_1$ corresponds to a vertex in $V$ and $\D_2$ to a vertex in $E$. Likewise, $s'$ is a segment between two connected components $\D_3$ and $\D_4$ , with $\D_3$ corresponding to a vertex in $V$ and $\D_4$ to a vertex in $E$. Now we use Lemma~\ref{lem:TwoSegmentsIntersect}, which states that one connected component from each pair are conflicting. By construction, this pair is $( \D_1, \D_4)$ or $( \D_2, \D_3)$. This shows that we are in the same situation as in the paragraph above. Thus it is possible to reduce each connected component corresponding to a vertex in $V$ to a point, and then extend the connected components from $E$ into curves. By doing that, we obtain a plane drawing of~$G$.
\end{proof}

Theorem~\ref {thm:subdivision} and its analogy to the proof for string graphs in~\cite{sinden1966Topology} may suggest that the class of conflict graph is equal to the class of string graphs.
We show that, indeed, the string graphs are contained in the conflict graphs, but conjecture that the opposite is not true.
  
  \begin {theorem}
  \label {thm:stringgraphsareconflictgraphs}
  All string graphs are conflict graphs.
  \end {theorem}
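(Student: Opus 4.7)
The plan is to take a representation of $G$ as a string graph, i.e., a collection of continuous curves $\gamma_1,\dots,\gamma_n$ in the plane (which we may assume are simple polygonal arcs meeting transversally at isolated points, with no three sharing a point), and to convert it into a plane drawing $\D$ whose conflict graph is $G$. For each vertex $u$ of $G$ I will construct one connected component $C_u$ of $\D$ that traces a small perturbation $s_u$ of $\gamma_u$, and I will choose distances so that $C_u$ and $C_v$ are conflicting exactly when $\gamma_u \cap \gamma_v \neq \emptyset$.

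First, for every edge $\{u,v\}$ of $G$ I would single out one crossing point of $\gamma_u$ and $\gamma_v$. Then, at every crossing (selected or not), I locally replace the crossing by a \emph{near-miss}: a small detour of one of the two arcs so that they no longer touch. At the selected crossing for an edge $\{u,v\}$, the detour is tuned so that the two new arcs pass within some small positive distance $\delta_{uv}$, with all $\delta_{uv}$ pairwise distinct, while every other detour leaves a much larger separation of at least some constant $\Delta$. This yields pairwise disjoint simple polygonal arcs $s_1,\dots,s_n$ with $\mathrm{dist}(s_u,s_v) \le \delta_{uv}$ when $\{u,v\}$ is an edge of $G$ and $\mathrm{dist}(s_u,s_v) \ge \Delta$ otherwise.

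Next, I would sample special points along each $s_u$ at a roughly uniform spacing $h$ satisfying $\max_{uv}\delta_{uv} < h < \Delta$, including a special point at each near-miss location on $s_u$. The component $C_u$ is the polygonal path connecting consecutive points along $s_u$ by straight segments. Because these segments lie arbitrarily close to the pairwise disjoint arcs $s_u$, the drawing $\D$ is plane. A generic small perturbation of the special points makes all pairwise distances distinct and ensures that the nearest-neighbor graph of the points of $C_u$ coincides with the path $C_u$ itself, so each $C_u$ is NN-representable.

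To verify the conflict structure: for each edge $\{u,v\}$, the two near-miss points $p_u \in C_u$ and $p_v \in C_v$ satisfy $|p_up_v| = \delta_{uv} < h = a(p_u)$, so $p_v \in b(p_u)$ and $C_u, C_v$ are conflicting; for each non-edge $\{u,w\}$, every $p \in C_u$ has $a(p) \le h < \Delta \le \mathrm{dist}(C_u,C_w)$, so $b(p) \cap C_w = \emptyset$ and symmetrically, hence $C_u, C_w$ are not conflicting. Thus the conflict graph of $\D$ is exactly $G$. The main obstacle is juggling the three competing requirements on $h$: it must exceed every $\delta_{uv}$ to expose the desired conflicts, lie below $\Delta$ to rule out spurious conflicts, and be small relative to the self-approach distance of each $s_u$ so that each component remains NN-representable. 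All three conditions can be secured by choosing the detour separations so that $\Delta \gg \max_{uv}\delta_{uv}$ and by subdividing the polygonalization finely enough; the remainder is a routine generic-position argument.
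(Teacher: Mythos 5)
There is a genuine gap, and it sits at the very first construction step: you cannot ``locally replace a crossing by a near-miss.'' If $\gamma_u$ crosses $\gamma_v$ transversally at a point $x$, then inside any small disk around $x$ the arc $\gamma_v$ separates the disk into two sides, and $\gamma_u$ enters on one side and leaves on the other; every detour of $\gamma_u$ that stays in that neighborhood must still cross $\gamma_v$. The only way to remove the intersection while keeping both curves connected is to reroute one of them globally, around an end of the other curve (or around it entirely). This is not a technicality: such a global rerouting drags the rerouted string past many other strings, creating new small separations that in turn threaten to create spurious conflicts, and fixing those forces further reroutings. Handling exactly this cascade is the substance of the paper's proof: it picks one string $s_0$, reroutes all strings that cross it \emph{inside a tunnel around $s_0$} so that the crossings accumulate near the start of $s_0$, then \emph{shortens} $s_0$ so the other strings now pass around its new end, and recurses on the remaining strings.

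The second casualty of this is your single global spacing $h$ with clearance $\Delta$. Once reroutings are unavoidable, strings must be squeezed through narrow corridors (in the paper, inside the width-$r_0$ tunnel), so non-adjacent or already-processed components necessarily come much closer to each other than any fixed $h$; your inequality $\mathrm{dist}(s_u,s_w)\geq\Delta>h$ for non-edges cannot be maintained. The paper's way out is an asymmetric, per-string resolution hierarchy $r_0>r_1>\cdots$ (shrinking superexponentially): a conflict for an edge is certified from the side of the \emph{coarser} component, whose nearest same-component point is far, while the finely sampled components can pass very close to everything without generating conflicts, because each of their points has a same-component neighbor even closer. Your uniform-$h$ sampling has no mechanism to achieve this, so even granting disjoint arcs, the ``no spurious conflicts'' claim would not follow. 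To repair the proposal you would essentially have to reintroduce the tunnel-rerouting-and-shortening step and the graded resolutions, i.e., the paper's argument.
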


  \begin {proof}
  Let $G$ be a string graph.
  We start by embedding $G$ as a set of $n$ strings (curves) in the plane. We may assume the strings are non-self-intersecting, but they could intersect other strings multiple times.
  For ease of exposition, we further assume that all curves are orthogonal polylines aligned to a unit grid.
  The proof steps are illustrated in Figure~\ref{fig:stringproof}.
  
  Fix a resolution $r_0 = \frac12$ such that if we place points on each string at distance $r_0$ from each other, then these points will always be closer to each other than to points on other strings, except near crossings.
  Pick an arbitrary string $s_0$. Consider the \enquote{tunnel} of width $r_0$ around $s_0$; by construction the other strings cross this tube in consecutive proper crossings. Consider the ordered list $c_1, c_2, ..., c_k$ of crossings of $s_0$ with other strings (note that $k$ could be independent of $n$).
  
  \begin{figure}[tp]
    \centering
    \includegraphics{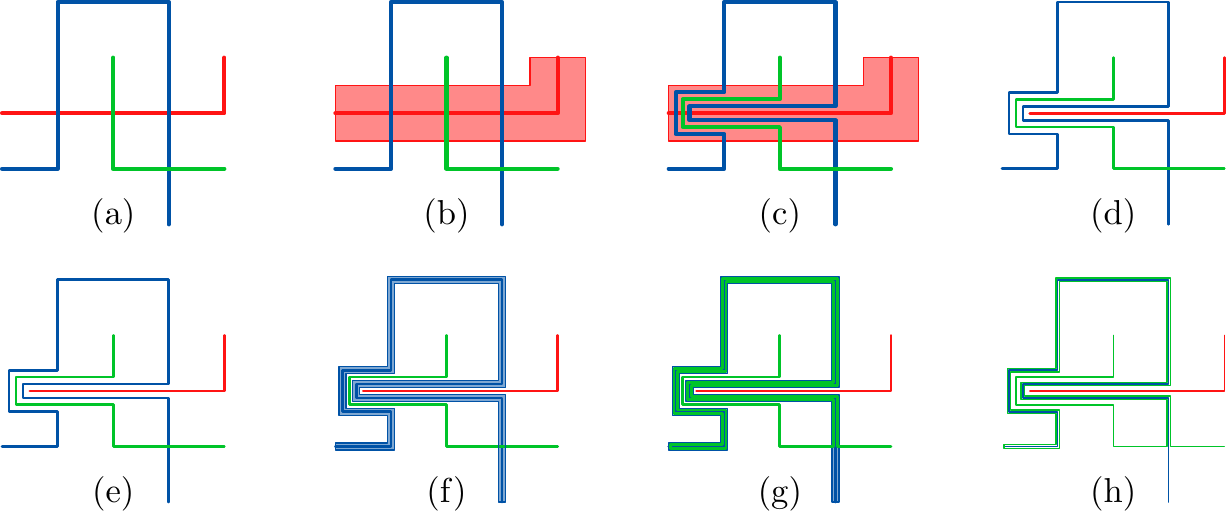}
    \caption{Illustration of the proof of Theorem~\ref {thm:stringgraphsareconflictgraphs}.
    (a) A string representation with $3$ strings.
    (b) The tunnel around the red string.
    (c) Rerouting the blue and green strings inside the red tunnel. Note that the blue string crosses the red string twice.
    (d) The final set of strings after one iteration of the algorithm. The red string no longer intersects any other strings.
    (e-h) The second (and last) iteration of the algorithm.
    }
    \label{fig:stringproof}
  \end{figure}
    
  Set $r_1 = r_0 / 10k$. We increase the resolution of all strings except $s_0$ to $r_1$. Now we reroute all other strings inside the tube of $s_0$ so that they keep distance $r_1$ from each other and from the tube boundary, but such that the crossings with $s_0$ are close to the start point of $s_0$; specifically, crossing $c_i$ should be at distance $2ir_1$ from the start point of $s_0$. Note that this rerouting is always possible.
  
  Now, we shorten $s_0$ by deleting the first $r_0$ length of it. After this, $s_0$ does not intersect any other strings, but if we keep the resolution of $s_0$ at $r_0$, it will have a conflict with exactly those strings that it originally intersected.
  Since the resolutions of the remaining strings were increased, they do not have any conflicts except near crossings with each other.

  We recursively apply this strategy: pick an arbitrary remaining string $s_i$, keep its resolution at $r_i$, determine its crossings and a finer resolution $r_{i+1}$, reroute all strings, and shorten $s_i$. In the end, we will have $n$ strings $s_0,\ldots,s_{n-1}$ at increasingly fine resolutions $r_0,\ldots,r_{n-1}$ whose conflict graph is exactly the original string graph.
  \end {proof}

  Note that the resolution (and hence the size of the components) has a superexponential growth.
  Hence, this does not give us a polynomial-time reduction, and thus we cannot conclude that coloring embedded conflict graphs is NP-hard from the fact that coloring embedded string graphs is NP-hard.

  However, the following lemmas show that both planar graphs as well as complete $k$-partite graphs are conflict graphs and the proof gives a polynomial-time reduction (for a fixed $k$ in the latter case). Note that Lemma~\ref {lem:planarconflict} gives us an alternative proof that coloring embedded conflict graphs is NP-hard.

 \begin{lemma}\label {lem:planarconflict}
  Every planar graph with $n$ vertices is a conflict graph of a set of components of complexity polynomial in $n$.
   \end{lemma}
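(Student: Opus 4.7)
The plan is to realize any planar graph $G=(V,E)$ as the conflict graph of a plane drawing obtained by blowing up a straight-line planar embedding of $G$. First, fix a planar straight-line embedding of $G$ on an $O(n)\times O(n)$ integer grid (via, e.g., Schnyder's algorithm) and rescale so that every edge has length $L$. Since the embedding lies on a polynomial-size grid, the minimum angular separation $\alpha$ between two edges incident to a common vertex and the minimum distance $d$ between two non-incident edges both satisfy $\alpha\ge 1/\mathrm{poly}(n)$ and $d\ge L/\mathrm{poly}(n)$.

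For each vertex $v\in V$ of degree $d_v$, I would build the component $C_v$ from two pieces. A \emph{hub} consisting of $d_v$ points placed on a small circle of radius $\rho$ around $v$, one at each angle along which an edge of $v$ leaves; and for every incident edge $vw$ an \emph{arm} consisting of an equally spaced chain of points at spacing $\delta$ running from the corresponding hub point along the segment $vw$ and stopping at a point at distance $\varepsilon$ from the midpoint of $vw$. Pick nested scales with $\delta=L/\mathrm{poly}(n)<d$, $\rho\approx\delta/\alpha$, and $\varepsilon\ll\delta$, so each arm contains $\mathrm{poly}(n)$ points and the entire drawing has $\mathrm{poly}(n)$ special points in total.

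Verification splits into three parts. First, the drawing is plane, because the hubs lie in disjoint disks of radius $\rho$ around the Fáry vertices and the arms stay inside disjoint tubes of width $<d$ around the Fáry edges. Second, each $C_v$ is NN-representable: the tuning $\rho\approx\delta/\alpha$ ensures that every hub point's nearest in-component neighbor is its angular neighbor on the hub circle (at distance $\approx\rho\alpha<\delta$), and each first arm point's nearest in-component neighbor is its hub point rather than a neighboring arm's first point (at distance $\approx(\rho+\delta)\alpha>\delta$); interior arm points see the next arm point at distance exactly $\delta$, and the last arm point sees the previous one. Third, conflicts match $E$: for $vw\in E$, the last arm points of $C_v$ and $C_w$ at the midpoint of $vw$ lie at distance $2\varepsilon<\delta$ and mutually witness a conflict, while for non-adjacent $u,v$ every point of $C_u$ is at distance $\ge d-2\rho>\delta$ from every point of $C_v$, so no spurious conflict arises.

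The main obstacle is calibrating the three scales $\rho,\delta,\varepsilon$ simultaneously against the two geometric quantities $\alpha$ and $d$ of the embedding. A short computation shows that the admissibility window for $\rho$ imposed by the NN-representability step, namely $\delta(1-\alpha)/\alpha<\rho<\delta/\alpha$, is always non-empty, and any $\rho$ inside it, together with $\delta$ a suitable $1/\mathrm{poly}(n)$ fraction of $d$ and $\varepsilon<\delta/2$, satisfies all the required inequalities. A final generic-position perturbation of order smaller than $\varepsilon$ makes all pairwise distances distinct without disturbing the conflict structure, and all coordinates remain representable in $\mathrm{poly}(n)$ bits.
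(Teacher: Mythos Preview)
Your overall strategy (blow up a F\'ary embedding, one component per vertex, engineer a conflict near the middle of every edge) is the same as the paper's, but the hub construction has a real flaw and the NN-representability claim does not go through as written.

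You place exactly $d_v$ hub points on a circle of radius $\rho$, one at each outgoing edge direction, and then assert that ``every hub point's nearest in-component neighbor is its angular neighbor on the hub circle (at distance $\approx\rho\alpha<\delta$)''. Here $\alpha$ is the \emph{minimum} angular gap over the whole embedding, but the angular gaps at $v$ are in general not all equal to $\alpha$; they can range up to almost $2\pi$. Two consecutive hub points separated by an angle $\theta$ are at distance $2\rho\sin(\theta/2)$, and with $\rho\approx\delta/\alpha$ this is $\gg\delta$ whenever $\theta\gg\alpha$. For such a hub point the nearest in-component neighbor is its first arm point (at distance $\delta$), not its angular neighbor, so the corresponding hub edge is \emph{not} in the nearest-neighbor graph. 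In fact, even if all hub--hub distances were below $\delta$, the NN graph of points on a circle omits every edge across a locally maximal gap, and a cyclic sequence of distinct gaps always has at least one local maximum; with two or more local maxima the hub breaks into several pieces. Either way, $C_v$ as you describe it is not NN-representable (if you draw the full hub polygon) or not a single connected component (if you draw only the NN edges), and a single vertex of $G$ ends up represented by several components of the drawing.

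The paper sidesteps exactly this issue by \emph{not} placing the per-vertex gadget at the angles of the incident edges. Instead it uses a fixed comb-shaped gadget: a horizontal row of $4n-3$ points at uniform spacing $\ell$, with vertical tines of increasing length attached at every other point. This gadget is NN-representable for purely local reasons independent of the geometry at $v$. An edge $\{v_i,v_j\}$ is then realized by a single path of step~$\ell$ attached to a tine tip $b_{j,\cdot}$ of one gadget and stopping strictly within distance $\ell$ of a tine tip $a_{i,\cdot}$ of the other; the path is part of the $V_j$ component, and the conflict occurs at the far end. Which tine is used is dictated by the circular order of neighbors in the F\'ary drawing, and the tines' graded lengths guarantee that distinct edge-paths stay separated. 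Your midpoint-meeting idea is fine, but to rescue it you would need a hub whose NN graph is provably connected regardless of the incident-edge angles, e.g., a densely sampled circle with $\Theta(1/\alpha)$ equally spaced points (then checking that only the intended $d_v$ of them spawn arms and that the extra hub points create no spurious conflicts).
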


  \begin{proof}
  Given a plane straight-line drawing $D$ of a planar graph $G$ (obtained by Fáry's theorem \cite{fary1948straight}) such that no two points are on a common vertical or horizontal line.
  We construct a drawing $D'$ such that its conflict graph is $G$.
  The rough idea is to replace every vertex by the gadget in Figure~\ref{fig:vertex_replacement}, and replace each edge by two edges which enforce a conflict.
   
  \begin{figure}
  \centering
  \includegraphics[width=0.3\textwidth]{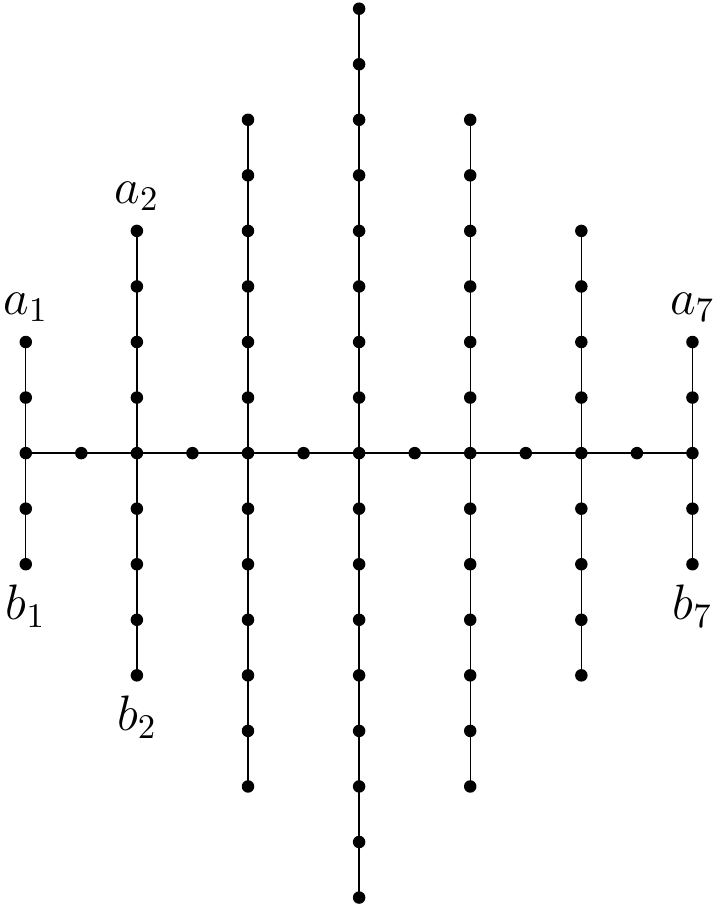}
  \caption{Every vertex is replaced by this gadget. Here for $|V|=4$.}
  \label{fig:vertex_replacement}
  \end{figure}

  The gadget's drawing $H$ (\cref{fig:vertex_replacement}) depends on the number of vertices $n$ of $G$.
  We construct a horizontal line with $4n-3$ points with small distance $\ell$.
  To every odd vertex we add vertical lines, starting with 4 additional vertices on the left and right end, increasing by 4 vertices every step towards the center (with same distance $\ell$).
  The topmost and bottommost vertex of these lines are denoted by $a_k$ and $b_k$ respectively, from left to right.

  We replace every vertex $v_i$ in $D$ with a copy of $H$, called $V_i$.
  Let $a_{i,j}$ and $b_{i,j}$ denote $a_j$ and $b_j$ of $V_i$ respectively.
  Consider an edge $\{v_i,v_j\}$ of $D$.
  Without loss of generality assume $v_j$ is above and to the left of $v_i$ (the other cases are symmetric).
  We consider all vertices in this quadrant of $v_i$ in clockwise order and assume that $v_j$ is the $x$-th such vertex.
  Accordingly, assume $v_i$ is the $y$-th vertex in clockwise order below and to the right of $v_j$.

  We draw the edge $(a_{i,x}, b_{j,2n-y})$ and replace it by a path $P$ of segments of length $\ell$ from $b_{j,2n-y}$ to a point $p$ with $0<d(p, a_{i,x})<\ell$.
  The points $p$ and $a_{i,x}$ enforce the conflict between $V_i$ and $V_j$ and hence the edge $\{v_i,v_j\}$ in $G$.

  The conflict graph of $D'$ obviously contains all the edges of $G$.
  It remains to show, that no additional edges are introduced by showing that any introduced path $P$ only conflicts with the corresponding component $V_i$.
  Assume that the distance between every edge $e$ and every vertex not incident to $e$ is at least $d$.
  If $\ell$ is chosen small enough then the distance of $P$ to every gadget $V_k$ except $V_i$ and $V_j$ is larger than $\ell$.

  Assume $Q$ is a path that conflicts with $P$.
  If $Q$ does neither start at $V_i$ nor at $V_j$, then $Q$ cannot conflict with $P$ since the distance between two independent edges is at least $d$.
  If $Q$ starts at $V_j$ we have two cases.
  If $Q$ is below $P$, then the slope of $Q$ is less than the slope of $P$.
  If $Q$ is above $P$, then the slope of $Q$ is greater than the slope of $P$.
  In both cases the smallest distance is between the starting points of $P$ and $Q$ which is at least $2\ell$ by construction.
  Hence, there does not exist such a further conflict.  
  \end{proof}

   \begin{lemma}
    Every complete $k$-partite graph with $n$ vertices is a conflict graph of a set of components of complexity polynomial in $n$.
   \end{lemma}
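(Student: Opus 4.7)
The plan is to construct the drawing using $N := \sum_i n_i$ straight chains, one per vertex. For each part $V_i$ I would pick a distinct direction $\theta_i$, and for each $v \in V_i$ realize $v$ as a polyline of many collinear special points with uniform spacing $\ell$ in direction $\theta_i$. The $n_i$ chains belonging to $V_i$ would be arranged as parallel lines with perpendicular offset $S > \ell$, which guarantees that within a single part no two chains can conflict.

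Since chains from different parts lie on lines of different slopes, any two of them have a natural crossing, violating planarity. I would replace every such would-be crossing by a small local \emph{crossing gadget}: both chains make tiny perpendicular detours near the crossing point, chosen so that (a) the resulting polylines remain disjoint in the plane, and (b) one special point of each chain comes within distance $\varepsilon < \ell$ of a special point of the other. This single close encounter produces a conflict for every inter-part pair, yielding the $K_{n_1,\dots,n_k}$ conflict structure.

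For correctness and complexity I would check the following: the drawing is plane by construction of the gadget; within each part the chains are parallel at perpendicular distance $S>\ell$, so no intra-part conflict arises; and every inter-part pair has exactly one close encounter coming from its gadget, giving exactly the required edge in the conflict graph. Each chain has at most $N - n_i$ would-be crossings with chains of other parts, and each gadget contributes only $O(1)$ special points, so the total complexity is $O(N^2)$, which is polynomial in $n$.

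The main obstacle will be the design and parameter tuning of the crossing gadget: one must ensure that local detours do not accidentally create close encounters with neighbouring gadgets or with other chains of the same part, and that the non-crossing condition holds globally rather than only locally. I plan to address this by choosing the gadget diameter and $\varepsilon$ much smaller than both the parallel spacing $S$ and the along-chain spacing between consecutive would-be crossings, so that distinct gadgets do not geometrically interfere and each one contributes exactly one conflict in the intended pair.
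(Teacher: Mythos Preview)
Your plan is plausible, but it is genuinely different from the paper's construction. The paper does not use crossings or local gadgets at all. Instead it exploits a \emph{multi-scale} trick: the $i$-th colour class is drawn as a bundle of parallel chains with resolution $r_i$ and mutual spacing $d_i=2r_i$, and the resolutions are chosen hierarchically so that $r_j>d_i$ whenever $j>i$. Then every special point on a chain of class $j$ has conflict radius $r_j$ large enough to reach some chain of every earlier class, which produces all inter-part edges globally, while $d_i>r_i$ prevents intra-part conflicts; successive bundles are placed orthogonally and earlier bundles are threaded between two chains of the next bundle, so the drawing stays plane without any detours. This is much easier to verify than a gadget-based construction, but the resolutions grow multiplicatively in the $n_i$, so the paper only claims polynomial size for fixed $k$. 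Your uniform-resolution approach with crossing gadgets is closer in spirit to the paper's planar-graph construction (Lemma~\ref{lem:planarconflict}); if it goes through it would even give $O(n^2)$ size uniformly in $k$. The obstacle you identify is real and not yet discharged: with $k$ directions and $\Theta(n^2)$ intersection points, two consecutive would-be crossings along the same chain can be arbitrarily close unless you also control the \emph{placement} of the parallel families, not just the parameters $\varepsilon$, $\ell$, $S$; you need an explicit argument that the chains can be positioned so that all crossing points are polynomially separated. You should also verify that the detours keep each chain NN-representable (the bend pattern and local spacing must still make every point's nearest neighbour its chain predecessor or successor).
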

  \begin{proof}
    Given a complete $k$-partite graph where each set with index $i$ has $n_i$ vertices.
    We will construct a graph with a drawing $D$ such that $G$ is the conflict graph of $D$, i.e., each set $i$ will consist of $n_i$ components which are not in conflict with each other but with all other components from the other sets.
  
    For each group $i$ the drawing needs two properties: the components' resolution $r_i$ and the components' minimum distance to each other $d_i$.
    The idea of the construction is that $d_i>r_i$ for all $i$ but $r_j>d_i$ for all $j>i$ so that components in the same set do not conflict but all components in a set $j$ conflict with all other components $i<j$ (and thus also with all $i>j$).
  
    \begin{figure}[tp]
      \centering
      \includegraphics[page=1,width=0.3\textwidth]{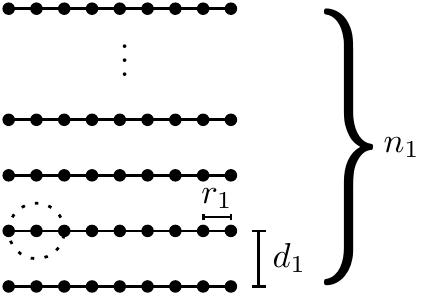}
      \hfill
      \includegraphics[page=2,width=0.48\textwidth]{k-partite}
      \caption{Left: The drawing for the first set. Right: The drawing for the third set with rerouted first set.}\label{fig:k-partite-drawing}
    \end{figure}
  
    We start with $n_1$ parallel components where $r_1=1$ and $d_1=2r_1=2$, see \cref{fig:k-partite-drawing} (left).
    For all following sets $1<i\leq k$ we set $r_i=\frac{3}{2}n_{i-1}d_{i-1}=3n_{i-1}r_{i-1}$ and $d_i=2r_i=3n_{i-1}d_{i-1}=6n_{i-1}r_{i-1}$.
    We place the components from the $i$th set orthogonal to the $(i-1)$th, starting at the opposite corner of a square of size $n_{i-1}d_{i-1}\times n_{i-1}d_{i-1}$, as seen in \cref{fig:k-partite-drawing} (right).
    We finally route the $(i-1)$th set between the last two components of the $i$th set with distance $d_{i-1}$ to the last component.
    This requires that $r_i$ and $d_i$ are multiples of $r_{i-1}$ and that $d_i$ is a multiple of $d_{i-1}$ which in ensured by our choices for $r_i$ and $d_i$.
  
    \begin{figure}[tp]
      \centering
      \includegraphics[page=3,scale=0.2,angle=90]{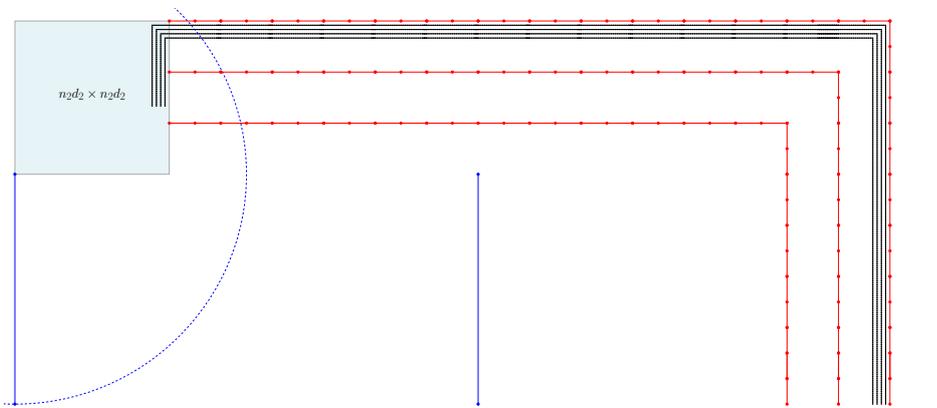}
      \caption{A conflict graph representation of the complete $3$-partite graph $K_{4,3,3}$.}\label{fig:k-partite-drawing-2}
    \end{figure} 

    A depiction of the resulting drawing for $n_1=4$, $n_2=3$, and $n_3=3$ can be found in \cref{fig:k-partite-drawing-2}.
    From the construction it is immediate that a component is not in conflict with a component of the same group but with all other components.
    Hence the initial graph $G$ is a conflict graph of our constructed drawing $D$.
  \end{proof}

\section{Separators of conflict graphs and chromatic number}

Let us recall the motivation for our problem. Starting from a drawing $\D$, we want to color the vertices of that drawing such that the nearest-neighbor graph on those colored vertices is the drawing $\D$. If the drawing is plane, we have shown that this problem boils down to coloring a conflict graph, where one vertex corresponds to a connected component of the drawing. We have shown in Theorem~\ref{thm:NP3colors} that this problem is NP-hard, even for $3$-coloring plane drawings. In this section, we show the following theorem:

\begin{theorem}
There exist an exact algorithm for maximum independent set, and an $O(\log n)$-approximation algorithm for vertex coloring in conflict graphs with $n$ vertices, running in $2^{n^{4/5}\polylog n}$-time.
\end{theorem}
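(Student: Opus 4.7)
The plan has three steps. First, I would establish a balanced separator theorem for conflict graphs: every $n$-vertex conflict graph admits a balanced separator of size $O(n^{4/5}\polylog n)$. Second, I would use this separator in a divide-and-conquer scheme to compute a maximum independent set exactly. Third, I would plug the maximum-independent-set routine into the standard greedy approximation for the chromatic number.

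The separator lemma is the heart of the argument and also the main obstacle. The proof should exploit the fact that each vertex of the conflict graph is a connected component embedded in the plane, and that every conflict edge is witnessed by a short segment between two nearby special points (Lemmas~\ref{lem:ConCompIntersect} and~\ref{lem:TwoSegmentsIntersect} already describe the local obstructions for such witnesses). The strategy is to cut the plane by a balanced curve chosen via a Lipton--Tarjan or grid-based argument on the underlying drawing, and to bound the number of components that straddle the cut. Since Theorem~\ref{thm:stringgraphsareconflictgraphs} shows that conflict graphs contain all string graphs, the bound must at least match what is known there; the exponent $4/5$ should arise by optimising a parameter trade-off between cut length and boundary complexity, using that conflict witnesses are local enough to let only few components cross the cut.

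Granted the separator lemma, maximum independent set follows the textbook template. Compute a balanced separator $S$ with $|S| = O(n^{4/5}\polylog n)$. For every independent subset $I' \subseteq S$, remove $S$ and the neighbours of $I'$ from $G$; the remaining graph decomposes into two halves $A$ and $B$ with no edges between them, so maximum independent sets on $A$ and $B$ can be computed recursively and the best combination returned. Enumerating the $2^{|S|}$ choices for $I'$ gives the recurrence
\[
  T(n) \leq 2^{O(n^{4/5}\polylog n)} \cdot \bigl(T(\alpha n) + T((1-\alpha) n)\bigr) + \mathrm{poly}(n),
\]
which solves to $T(n) = 2^{O(n^{4/5}\polylog n)}$ since the exponents decay geometrically along the recursion tree and are dominated by the root.

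For the $O(\log n)$-approximation to the chromatic number, I would iterate the greedy scheme: extract a maximum independent set, give it a new colour, delete it, and repeat until the graph is empty. Since any maximum independent set has size at least $n/\chi(G)$, after $k$ rounds at most $n(1-1/\chi(G))^k$ vertices remain, so $O(\chi(G) \log n)$ rounds suffice. Each round invokes the exact maximum-independent-set subroutine once, hence the total running time stays within $2^{O(n^{4/5}\polylog n)}$, and the number of colours used exceeds the optimum by only an $O(\log n)$ factor.
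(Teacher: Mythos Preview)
Your plan has a genuine gap in Step~1: the claimed separator of size $O(n^{4/5}\polylog n)$ is simply false for conflict graphs. Complete graphs are conflict graphs (they are string graphs, and the paper even gives a direct construction for complete $k$-partite graphs), and $K_n$ admits no balanced separator of size $o(n)$. So no argument---Lipton--Tarjan, grid-based, or otherwise---can produce the separator you are asserting, and the recurrence you wrote down never gets off the ground for dense instances.

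What the paper actually proves is a separator whose size depends on the \emph{degree structure}: every conflict graph with $m$ edges and maximum degree $\Delta$ has a separator of order $O(\Delta\sqrt{m}\log m)$. This is obtained not by cutting the plane geometrically, but by bounding the pair-crossing number $\textit{pcr}(G)$ by the number of paths of length $2$ or $3$ in $G$ (this is where Lemmas~\ref{lem:ConCompIntersect} and~\ref{lem:TwoSegmentsIntersect} are used, to show that intersecting conflict-witness curves force a short path in $G$), and then invoking the Kolman--Matou\v{s}ek bisection-width inequality. With this separator in hand, the paper does not run your na\"{\i}ve divide-and-conquer; instead it appeals to the Fox--Pach maximum-independent-set algorithm, which combines the separator with a branching step on high-degree vertices. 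The exponent $4/5$ arises from balancing a degree threshold $D\approx n^{1/5}$: vertices of degree above $D$ are branched on directly, and once all degrees are below $D$ the separator size $O(D^{3/2}\sqrt{n}\log n)$ becomes $O(n^{4/5}\log n)$. Your proposal is missing this entire degree-based case analysis.

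Your Step~3 (greedy extraction of maximum independent sets to get an $O(\log n)$-approximation for the chromatic number) is correct and is exactly what the paper does.
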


The exact algorithm for maximum independent set is used as a subroutine to obtain the $O(\log n)$-approximation algorithm for coloring conflict graphs in subexponential time, where $n$ denotes the number of vertices in the conflict graph. Indeed, coloring vertices can be seen as a covering problem, where there is a hyperedge for a set of vertices if and only if those are independent. As we have an exact algorithm for maximum independent set, we can use the greedy algorithm for covering to obtain the $O(\log n)$ approximation. 

In~\cite{fox2011computing}, Fox and Pach present an algorithm running in $2^{n^{4/5}\polylog n}$-time for maximum independent set in string graphs. The input is an abstract graph, and it outputs a maximum independent set or a certificate that the input graph is not a string graph. As they observe, the only property they use is the following separator lemma: Every string graph with $m$ edges and maximum degree $\Delta$ contains a separator of order at most $c\Delta m^{1/2} \log m$~\cite{fox2011computing}. A separator in a graph $G=(V,E)$ with $n$ vertices is a subset $V_0 \subset V$ such that there is partition of $V$ into three sets $V=V_0 \cup V_1 \cup V_2$, with $|V_1|\leq 2n/3$ and  $|V_2|\leq 2n/3$, such that there is no edge between a vertex in $V_1$ and a vertex in $V_2$. We show that this lemma also holds for conflict graphs, which immediately implies that the algorithm by Fox and Pach also applies to our setting. The lemma was actually proven in another paper by the same authors, denoted there as Theorem 2.5~\cite{fox2010separator}. The proof uses several lemmas, but the assumption that the considered graph is a string graph appears only once. As defined in~\cite{fox2010separator}, the \emph{pair-crossing number} $\textit{pcr}(G)$ of a graph $G$ is the minimum number of pairs of edges that intersect in a drawing of $G$. Fox and Pach showed that if $G$ is a string graph, then $\textit{pcr}(G)$ is at most the number of paths of length $2$ or $3$ in $G$, where the length of a path is the number of its edges. Following the proof of their Theorem 2.5~\cite{fox2010separator}, it is sufficient for us to show the following:

\begin{lemma}\label{lem:conflictPCR}
 If $G$ is a conflict graph, then $\textit{pcr}(G)$ is at most the number of paths of length $2$ or $3$ in $G$.
\end{lemma}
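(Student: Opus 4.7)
The plan is to adapt Fox and Pach's argument for string graphs~\cite{fox2010separator}, using the conflict-witness segments in place of string crossings. Let $\D$ be a plane drawing whose conflict graph is $G$, with connected components $\D_1,\ldots,\D_n$. For each component $\D_i$, pick an anchor point $p_i \in \D_i$, and for each edge $e = \{i,j\}$ of $G$ fix a certifying pair of witness points $u_e \in \D_i$, $v_e \in \D_j$ so that $v_e \in b(u_e)$ or $u_e \in b(v_e)$. I would draw the edge $e$ of $G$ as the concatenation of a curve inside $\D_i$ from $p_i$ to $u_e$, the straight segment $s_e := u_e v_e$, and a curve inside $\D_j$ from $v_e$ to $p_j$. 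Because $\D$ is plane and each $\D_i$ is itself a plane embedded graph, the component-internal subcurves used by different edges of $G$ can be routed to pairwise avoid each other in their interiors (overlapping along shared subsegments does not count as a crossing), so every crossing in the resulting drawing of $G$ involves at least one witness segment.

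The second step is to charge each such crossing to a path of length $2$ or $3$ in $G$. In the first case, a witness segment $s_e$ with $e = \{i,j\}$ meets a component-internal subcurve lying on a third component $\D_k$; Lemma~\ref{lem:ConCompIntersect} then forces $\D_k$ to conflict with $\D_i$ or $\D_j$, producing a length-$2$ path $k$--$i$--$j$ (or $k$--$j$--$i$) containing $e$. In the second case, two witness segments $s_e$ and $s_{e'}$ with $e = \{i,j\}$ and $e' = \{k,\ell\}$ cross; after verifying the hypothesis of Lemma~\ref{lem:TwoSegmentsIntersect}, I obtain that some component in $\{\D_k,\D_\ell\}$ conflicts with some component in $\{\D_i,\D_j\}$, yielding a length-$3$ path containing both $e$ and $e'$.

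The main obstacle is turning this charging scheme into a genuine injection from crossings to paths, so that no path of length $2$ or $3$ is charged twice. Multiple crossings between the same pair of edges would need to be removed by a local rerouting of the witness segments, or disambiguated by additionally recording which component-subcurve is being crossed. A further subtle point is verifying the hypothesis ``$u, v \notin b(p)$'' of Lemma~\ref{lem:TwoSegmentsIntersect} for each type-(ii) crossing; when the hypothesis fails, the definition of $b(\cdot)$ already places an endpoint of one witness segment inside the ``forbidden disc'' of an endpoint of the other, which directly supplies a length-$2$ or length-$3$ path among $\D_i,\D_j,\D_k,\D_\ell$ that I can use instead. Finally, any geometric degeneracies arising from collinear witness points or from coincident anchors can be handled by a generic perturbation of the $p_i$ and of the witness pairs, without changing the abstract graph $G$.
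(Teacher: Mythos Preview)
Your overall architecture matches the paper's proof: build a drawing of $G$ from anchor points, component-internal routes, and the straight witness segments, then charge each \emph{pair} of crossing edges to a short path in $G$ via Lemmas~\ref{lem:ConCompIntersect} and~\ref{lem:TwoSegmentsIntersect}. Your remarks about the hypothesis of Lemma~\ref{lem:TwoSegmentsIntersect} and about perturbing degeneracies are fine and in fact slightly more careful than the paper.

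However, the charging you set up in your Case~1 is the actual gap, and the fixes you suggest do not close it. You assign the crossing of $s_e$ (with $e=\{i,j\}$) against an internal subcurve on $\D_k$ to the length-$2$ path $k\text{--}i\text{--}j$. This path records $e$ and the vertex $k$, but it does \emph{not} record the second edge $e'=\{k,\ell\}$ whose internal subcurve you hit. Every edge incident to $k$ whose subcurve is crossed by $s_e$ is sent to the same length-$2$ path, so the map from crossing pairs to paths is not injective. Your suggestion to ``additionally record which component-subcurve is being crossed'' is just recording $e'$, which is extra data beyond a path and therefore does not give the desired bound; and the concern you raise about ``multiple crossings between the same pair of edges'' is irrelevant for $\textit{pcr}$, which counts crossing \emph{pairs}, not crossing \emph{points}.

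The paper avoids this by organizing the case split differently: first check whether the two crossing edges $x$ and $y$ share a vertex. If they do, assign the length-$2$ path whose two edges are exactly $x$ and $y$. If they do not, say $x=\{\D_1,\D_2\}$ and $y=\{\D_3,\D_4\}$ with all four components distinct, then whether the crossing happens inside a component (Lemma~\ref{lem:ConCompIntersect}) or between two witness segments (Lemma~\ref{lem:TwoSegmentsIntersect}), you obtain a conflict between one component of $x$ and one of $y$, say $\{\D_1,\D_3\}\in E(G)$, and assign the length-$3$ path $\D_2\text{--}\D_1\text{--}\D_3\text{--}\D_4$. The point is that in both cases the assigned path has $x$ and $y$ as its two \emph{end} edges, so the crossing pair is recoverable from the path and the map is injective. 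Replacing your length-$2$ charge in Case~1 by this length-$3$ path (when $e$ and $e'$ are vertex-disjoint) is exactly the missing step.
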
 

\begin{proof}
We use a similar notation to~\cite{fox2010separator}. Let us consider a representation of a conflict graph. For each pair of conflicting connected components $\D_i$ and $\D_j$, we consider two points $p\in \D_i$ and $q\in \D_j$, such that $q\in b(p)$. We now consider the drawing $\D$ consisting of the union of the connected components and the line segments with endpoints $p,q$, for each pair of conflicting connected components. If three or more line segments intersect at the same point, we shift slightly the relative interior of one so that this is not the case anymore. It is not an issue that those are not segments anymore, we only want them to be Jordan curves. For simplicity, we keep referring to them as line segments with endpoints $p,q$. For each connected component $\D_i$, we consider an arbitrary point $p_i$ on $\D_i$. Let $x=\{\D_i,\D_j\}$ denote a pair of conflicting connected components. We denote by $\alpha(x)$ a curve that starts at $p_i$, goes along $\D_i$ until it reaches the line segment between $\D_i$ and $\D_j$ that we have added to the drawing, follows this line segment until it reaches $\D_j$, and finally goes along $\D_j$ until it reaches $p_j$. Observe that this gives us a drawing of $G$ in the plane. Suppose that two edges $\alpha(x)$ and $\alpha(y)$ in this drawing intersect. We claim that they determine a unique path of length $2$ or $3$ in $G$.

First if $x$ and $y$ share a connected component $\D_i$, then they determine a path of length $2$ where $\D_i$ is the middle vertex. Now let us assume without loss of generality that $x=\{\D_1,\D_2\}$ and $y=\{\D_3,\D_4\}$. If the curves $\alpha(x)$ and $\alpha(y)$ intersect inside one of the connected components, say $\D_3$, then we can apply Lemma~\ref{lem:ConCompIntersect} to infer that $\D_3$ is in conflict with one of the components in $x$, say $D_1$.
    If $\alpha(x)$ and $\alpha(y)$ intersect outside of a connected component (i.e., the line segment between $D_1$ and $D_2$ intersects the line segment between $D_3$ and $D_4$) we can similarly apply Lemma~\ref{lem:TwoSegmentsIntersect} to show that one component from $y$, say $\D_3$, is in conflict with one component in $x$, say $D_1$.
In both cases the vertices $\D_2$, $\D_1$, $\D_3$, $\D_4$ form a path of length $3$.
We have shown that a pair of edges that intersect in the drawing determine a unique path of length $2$ or $3$ in $G$.
\end{proof}

Lemma~\ref{lem:conflictPCR} is sufficient to show that conflict graphs with $m$ edges and maximum degree $\Delta$ contain a separator of order at most $c\Delta m^{1/2} \log m$, as shown by Fox and Pach~\cite{fox2010separator}. Following their notation, the \emph{bisection width} $b(G)$ of a graph is the least integer such that there is a partition $V=V_1 \cup V_2$ with $|V_1|$, $|V_2|\leq 2|V|/3$ and the number of edges between $V_1$ and $V_2$ is $b(G)$. As shown by Kolman and Matou{\v{s}}ek~\cite{kolman2004crossing}, we have for every graph $G$ on $n$ vertices $b(G)\leq c \log(n) (\sqrt{\textit{pcr}(G)}+\sqrt{\textit{ssqd}(G)})$ where $c$ is a constant and $\textit{ssqd}(G)$ is twice the number of paths of length $1$ or $2$ in $G$. By denoting by $p$ the number of paths of length at most $3$ in $G$, we derive from Lemma~\ref{lem:conflictPCR} that if $G$ is a conflict graph, then $b(G)=O(p^{1/2}\log n)$. A simple argument proven by Fox and Pach states that $p$ is at most $m\Delta^2$ for a graph with $m$ edges and maximum degree $\Delta$~\cite{fox2010separator}, which concludes the proof.

  \section{Conclusion and open problems}
In this work we studied the decomposition of a drawing into nearest-neighbor graphs.
First, we studied the decision problem, whether for a given natural number $k\geq 2$ it is possible to decompose a drawing into $k$ nearest-neighbor graphs.
If we allow that segments of the drawing cross the problem is NP-complete.
If we assume that the segments only meet at endpoints, it is NP-complete for $k\geq 3$ and polynomial-time solvable for $k=2$. We provided an $O(\log n)$-approximation algorithm running in subexponential time for coloring plane drawings with a minimum number of colors, which we showed to be equivalent to partitioning a plane drawing into a minimum number of nearest-neighbor graph. It would be interesting to find better approximation algorithms, with respect to the approximation ratio or the running time. Also, it would be interesting to study other variants of this problem; specifically, where points can have multiple colors.

We introduced so called conflict graphs and
showed that not every graph is a conflict graph, but every string graph is a conflict graph.
It is an open problem, whether there is a conflict graph, which is not a string graph.
Further it would be interesting to know relations between other graph classes and conflict graphs.

\bibliography{literature}

\end{document}